\documentclass[a4paper,UKenglish,cleveref, autoref, thm-restate]{lipics-v2021}

\pdfoutput=1
\hideLIPIcs

\title{Simple Nash Equilibria for Qualitative Multiplayer Games}

\author{Mona Alluwaym}{University of Liverpool, United Kingdom}{}{https://orcid.org/0009-0007-5661-7885}{}

\author{{James C.~A.} Main}{University of Oxford, United Kingdom}{}{https://orcid.org/0009-0000-8471-4833}{}

\author{Sven Schewe}{University of Liverpool, United Kingdom}{}{https://orcid.org/0000-0002-9093-9518}{}

\authorrunning{Mona Alluwaym, James C.\ A.\ Main, and Sven Schewe} 

\Copyright{Mona Alluwaym, James C.~A.~Main and Sven Schewe} 

\ccsdesc[100]{Theory of computation~Solution concepts in game theory} 

\keywords{games on graphs, multiplayer games, Nash equilibria, subgame-perfect equilibria, memoryless strategies} 

\category{} 

\relatedversion{}

\funding{This work has been supported by the EPSRC projects EP/Z536179/1, EP/X03688X/1, and EP/X042596/1.}

\nolinenumbers

\usepackage[T1]{fontenc}
\usepackage{graphicx}
\usepackage{amsmath,amssymb,mathtools}
\usepackage[disable]{todonotes}
\usepackage{tikz}
\usetikzlibrary{automata,positioning,arrows.meta,fit,backgrounds,shapes.geometric,calc}

\pgfdeclarelayer{background}
\pgfsetlayers{background,main}

\newcommand{\llb}{[\![}
\newcommand{\rrb}{]\!]}
\newcommand{\G}{\mathcal{G}}

\newcommand{\Player}{\mathcal{P}}
\newcommand{\Reach}{\mathrm{Reach}}
\newcommand{\Safe}{\mathrm{Safe}}

\begin{document}

\maketitle

\begin{abstract}
We investigate memory requirements for Nash and subgame-perfect equilibria in turn-based deterministic games with $\omega$-regular objectives.
We prove that memoryless randomised (i.e., stationary) subgame-perfect equilibria always exist in games with reachability, safety, and 0-2 Muller objectives (i.e., Muller objectives for which accepting sets are either up- or downward closed), and any combination of these objectives.
We provide an algorithm to construct such an equilibrium.
We also show that randomisation may be required to construct memoryless equilibria in games with reachability or B\"uchi as well as safety or CoB\"uchi objectives, and that memoryless equilibria need not exist for any other class of Muller objectives (with respect to the Mostowski hierarchy).
\end{abstract}

\todo[inline]{The title page does not count toward the page limit, hence the page break.}

\section{Introduction}
\subparagraph*{Games on graphs.}
We study turn-based games played on finite graphs (e.g.,~\cite{DBLP:conf/dagstuhl/2001automata,gog23}).
The structure on which the game is played is called a turn-based deterministic arena (or simply an arena); it is described by a directed graph and a partition of the vertices among the players.
At the beginning of a play, a token is placed on an initial vertex and, in each round, the player in control of the current vertex moves the token along an outgoing edge.
The players continue this interaction for infinitely many rounds and construct a play, i.e., an infinite path of the graph.
The goal of a player is formalised by an objective, i.e., a subset of plays.
Players follow strategies describing how to select moves.
Formally, strategies are functions assigning to any play history the decisions to be made.
In full generality, strategies may use memory and randomisation.
A strategy is called pure if it deterministically assigns moves to histories, memoryless (or stationary) if its decisions depend only on the current vertex and not the full history, and positional if it is both memoryless and pure.

Games on graphs are notably useful to solve the controller synthesis problem (e.g.,~\cite{DBLP:reference/mc/BloemCJ18}).
Given a reactive system, the goal of controller synthesis is to automatically construct a controller for the system that enforces some specification no matter the behaviour of the environment of the system.
It can be solved by modelling the interaction of the system and its environment as a zero-sum game on a graph: the goal of the system player is to enforce their objective (which formalises the specification) and the environment player aims to falsify this objective.
Strategies of the system player correspond to controllers; thus the goal is to find a winning strategy, i.e., a strategy enforcing the objective of the system against all strategies of its adversary.
In general, we are interested in \emph{simple winning strategies} (e.g., using limited memory), as they yield simpler controllers (see~\cite{DBLP:conf/rp/Randour25} and references therein).

\subparagraph*{Multiplayer games.}
For systems consisting of several components interacting together, treating each component as competing with the others may be too restrictive (e.g., it precludes cooperation between components).
Instead, we consider games in which each component is a player.
Each player has their own objective and these are not necessarily incompatible.
Such games are called multiplayer \emph{non-zero-sum games}.

\emph{Nash equilibria}~\cite{Nash50} (NE) are the most classical formalisation of rational behaviour in such games.
An NE is a strategy profile, i.e., a tuple of one strategy per player, such that no player can increase the probability that their objective holds by unilaterally changing their strategy.
\emph{Subgame-perfect equilibria}~\cite{Sel65} (SPE) are another well-studied notion of rationality.
Intuitively, SPEs are a refinement of NEs requiring that from all histories of the game (even those that are not reachable with the considered strategy profile), the players follow an NE.
This stronger requirement excludes NEs based on non-credible threats, i.e., where a player threatens to make a move that is detrimental to the satisfaction of their objective after some history to discourage other players from moving to some vertex.

We focus on the complexity of equilibria in games with $\omega$-regular objectives.
The class of $\omega$-regular languages is fundamental in the fields of synthesis and verification~\cite{BK08}, and includes, notably, linear temporal logic~\cite{DBLP:conf/focs/Pnueli77}.
In games with $\omega$-regular objectives, there exist pure SPEs such that all strategies are \emph{finite-memory}~\cite{DBLP:conf/fsttcs/Ummels06} -- see also~\cite{DBLP:conf/lfcs/BrihayePS13,depril2013} for a proof of existence of NE with little memory (for a class of objectives subsuming $\omega$-regular objectives).

We focus on three classes of $\omega$-regular objectives.
First, \emph{reachability objectives} (also known as reach-a-set objectives), the goal of which is to reach a target.
Second, \emph{safety objectives} (also known as stay-in-a-set objectives), the goal of which is to remain within a subset of vertices.
For general $\omega$-regular objectives, we consider \emph{Muller objectives}.
A Muller objective is described by a family of subsets of vertices; a play satisfies a Muller objective if the set of vertices occurring infinitely often is in the family defining the Muller objective.
Muller objectives generalise Büchi, CoBüchi and parity objectives.

\subparagraph*{When is memory needed?}
We investigate the existence of memoryless equilibria in games on turn-based deterministic arenas.
This question has been settled for more general arenas.
We briefly survey known results regarding the existence of (memoryless) equilibria in games on concurrent (stochastic) arenas and turn-based stochastic arenas.
In a concurrent arena, the players select actions simultaneously in each vertex and the next vertex is selected following a distribution depending only on the current vertex and chosen actions.
A turn-based stochastic arena can be seen as a concurrent arena in which only at most one player has more than one choice of action at each vertex.

For discounted concurrent stochastic games, memoryless randomised SPE are known to exist~\cite{Fink64,Takahashi64}.
This property does not generalise to concurrent games with reachability, safety or Muller objectives.
For safety objectives, while NEs exist in concurrent games~\cite{DBLP:journals/ijgt/SecchiS02}, there need not exist a memoryless NE even on turn-based stochastic arenas~\cite{DBLP:journals/corr/abs-1903-11935}.
For reachability objectives, NEs need not exist in the concurrent case~\cite{DBLP:journals/mor/FleschTV96}, although memoryless $\varepsilon$-NE have been shown to exist~\cite{DBLP:conf/csl/ChatterjeeMJ04} (i.e., strategy profiles in which no unilateral deviation increases the probability of the objective of a player by $\varepsilon$) for all $\varepsilon > 0$.
In turn-based stochastic games with $\omega$-regular objectives, pure NE are known to exist~\cite{DBLP:conf/csl/ChatterjeeMJ04}.
However, there need not exist (randomised) memoryless NE already in turn-based  reachability games with absorbing targets~\cite{DBLP:journals/eor/KuipersFSV09,DBLP:phd/dnb/Ummels11}.
This negative example for reachability can also be seen as a game with a Muller objective.
For games on turn-based deterministic arenas, we highlight the example in~\cite{DBLP:journals/dam/BorosGMOV18} of a turn-based deterministic game with no memoryless NE, where the goal of the players is to optimise some payoff attributed to some absorbing vertices and the unique cycle of the arena.
Finally, for complex Muller objectives (in the sense of the Mostowski hierarchy)\todo{Reference here?}, there need not exist memoryless randomised NEs in turn-based deterministic zero-sum games~\cite{DBLP:conf/qest/ChatterjeeAH04}.

All of the negative results above require either a stochastic game, payoff functions or complex Muller objectives.
It follows that they do not settle the problem of existence of memoryless equilibria in reachability and safety games, and in games with simple Muller objectives on turn-based deterministic arenas.

\subparagraph*{Our contribution.}
We prove the existence of \emph{memoryless randomised SPE} in games where all players have an objective from three classes: safety objectives, reachability objectives, or simple Muller objectives, where all winning sets of vertices are up- or downwards closed (such as parity objectives with colours 0, 1, and 2).
Our proof is constructive and can be used to effectively compute strategy profiles that are \emph{everywhere NEs (ENEs)} — strategy profiles that are NEs from every vertex of the game, rather than from a fixed initial vertex. This is strictly stronger than NEs from a single initial vertex. For this class of games, every ENE is an SPE, although this implication does not hold in general. Figure~\ref{fig:spe-not-ene} illustrates this distinction: the memoryless strategy profile in Figure~\ref {fig:spe-not-ene}b is an SPE from the designated initial vertex but fails to be an ENE, whereas the strategy profile depicted in Figure~\ref{fig:spe-not-ene}(c) is both an ENE and an SPE.
Moreover, building these equilibria is simple and elegant, and our algorithm runs in polynomial time.

We complement our existence results with negative examples, which show that we cannot expect much more: when we have players with reachability (or B\"uchi) and safety (or CoB\"uchi) objectives, there may not be positional SPEs.
For Muller objectives that do not fall into the class we treat, the situation is worse: even if we ask for the dual class, where all losing sets of vertices are up- or downwards closed, these games may not have any memoryless NE at all. 

\begin{figure}[t]
\centering
\begin{minipage}{0.32\textwidth}
\centering
\begin{tikzpicture}[
    scale=0.7,
    transform shape,
    >=Stealth,
    shorten >=1pt,
    auto,
    every state/.style={draw, thick, minimum size=10mm, inner sep=1pt}
]
\tikzset{
    p1/.style={draw, circle, thick, minimum size=10mm, inner sep=1pt, fill=red!20},
    p2/.style={draw, regular polygon, regular polygon sides=6, thick, minimum size=10mm, inner sep=1pt, fill=blue!20}
}
\node[p1] (t2) at (0,0) {$t_2$};
\node[p2] (t1) at (3,0) {$t_1$};
\draw[->, thick] (-1.2,0) -- (-0.55,0);
\path[->, thick]
    (t2) edge[bend left=20] (t1)
    (t1) edge[loop right] (t1)
    (t1) edge[bend left=20] (t2);
\end{tikzpicture}

\small (a)

\end{minipage}
\hfill
\begin{minipage}{0.32\textwidth}
\centering
\begin{tikzpicture}[
    scale=0.7,
    transform shape,
    >=Stealth,
    shorten >=1pt,
    auto,
    every state/.style={draw, thick, minimum size=10mm, inner sep=1pt}
]
\tikzset{
    p1/.style={draw, circle, thick, minimum size=10mm, inner sep=1pt, fill=red!20},
    p2/.style={draw, regular polygon, regular polygon sides=6, thick, minimum size=10mm, inner sep=1pt, fill=blue!20}
}
\node[p1] (t2) at (0,0) {$t_2$};
\node[p2] (t1) at (3,0) {$t_1$};
\draw[->, thick] (-1.2,0) -- (-0.55,0);
\path[->, thick]
    (t2) edge[bend left=20] (t1)
    (t1) edge[loop right] (t1);
\end{tikzpicture}

\small (b)

\end{minipage}
\hfill
\begin{minipage}{0.32\textwidth}
\centering
\begin{tikzpicture}[
    scale=0.7,
    transform shape,
    >=Stealth,
    shorten >=1pt,
    auto,
    every state/.style={draw, thick, minimum size=10mm, inner sep=1pt}
]
\tikzset{
    p1/.style={draw, circle, thick, minimum size=10mm, inner sep=1pt, fill=red!20},
    p2/.style={draw, regular polygon, regular polygon sides=6, thick, minimum size=10mm, inner sep=1pt, fill=blue!20}
}
\node[p1] (t2) at (0,0) {$t_2$};
\node[p2] (t1) at (3,0) {$t_1$};
\draw[->, thick] (-1.2,0) -- (-0.55,0);
\path[->, thick]
    (t2) edge[bend left=20] (t1)
(t1) edge[bend left=20] (t2);
\end{tikzpicture}

\small (c)

\end{minipage}
\caption{A two-player reachability game. Circles represent $\Player_1$ and hexagons represent $\Player_2$. Vertex $ti$ is the target of $\Player_i$ (b) illustrates a memoryless strategy profile that is an SPE from $t_2$ but not an ENE. (c) illustrates a memoryless strategy profile, which is both an ENE and an SPE.}
\label{fig:spe-not-ene}
\end{figure}

\subparagraph*{Related work.}
We discuss three related directions: memory requirements in games, decision problems for equilibria and the interplay of memory and randomisation in strategy complexity.

Memory requirements in games on graphs have been thoroughly studied for zero-sum games.
In turn-based deterministic arenas, Gimbert and Zielonka provide a characterisation of objectives for which there exist optimal memoryless strategies for both players~\cite{GZ05}.
In the finite-memory case, there exists an analogous characterisation for objectives for which the memory structure for winning strategies of both players can be chosen independently of the arena~\cite{DBLP:journals/lmcs/BouyerLORV22}.
Characterisations of objectives for which memoryless strategies and finite-memory strategies suffice to win are given in~\cite{DBLP:journals/theoretics/Ohlmann23} and~\cite{DBLP:journals/lmcs/CasaresO25} respectively using universal graphs.
For multiplayer games,~\cite{DBLP:journals/iandc/RouxP18} provides sufficient conditions for the existence of finite-memory NEs based on the existence of finite-memory winning strategies in related zero-sum games.

From an algorithmic perspective, deciding the existence of an NE or an SPE is trivial in $\omega$-regular games on turn-based deterministic arenas, as they always exist.
A more relevant problem is the constrained equilibrium existence problem.
In pure strategies, it asks whether there exists an equilibrium from a given initial vertex whose unique outcome is winning for (at least) at given subset of players.
In randomised strategies, this translates to lower bounds on the satisfaction probability of each objective by the equilibrium.
The main motivation is that, in general, one is not interested in any equilibrium, but instead desires an equilibrium that satisfies some objectives.
The constrained (pure) NE existence problem is \textsf{NP}-complete for reachability~\cite{DBLP:journals/jcss/BrihayeBGT21}, safety~\cite{DBLP:conf/icalp/ConduracheFGR16} and co-Büchi~\cite{DBLP:conf/fossacs/Ummels08} games, and it can be solved in polynomial time for Büchi games.
The constrained (pure) SPE existence problem is \textsf{PSPACE}-complete for reachability and safety games~\cite{DBLP:journals/iandc/BrihayeBGR21}.
Solutions to the constrained equilibrium existence problem require memory in general; see~\cite{DBLP:conf/fsttcs/BrihayeGMR23} for a discussion of the role of memory in solutions to the constrained equilibrium existence problem in reachability games and~\cite{DBLP:journals/iandc/Main26} for upper bounds on memory requirements for solutions to the constrained NE existence problem in reachability, safety, Büchi and co-Büchi games.
The constrained equilibrium existence problem has also been studied in more general models, e.g., in concurrent games~\cite{DBLP:conf/fsttcs/BouyerBMU11,DBLP:phd/hal/Brenguier12} and timed games~\cite{DBLP:conf/concur/BouyerBM10,DBLP:conf/formats/BrihayeG20}.

Finally, we discuss the interplay of memory and randomisation.
In our construction, we use randomisation to ensure that all vertices of some parts of the arena are visited infinitely often instead of doing so with memory.
Several works study trade-offs between memory and randomisation.
For instance, it is shown in~\cite{DBLP:conf/qest/ChatterjeeAH04} that in zero-sum Muller games with upward-closed accepting sets, memory is necessary in general for pure winning strategies but not when using randomised strategies.
The same property holds in games where the goal is to enforce a conjunction of a parity and mean-payoff objective~\cite{DBLP:journals/acta/ChatterjeeRR14}.
More generally, one can lessen the amount of memory needed by considering randomised finite-memory strategies; see, e.g.,~\cite{DBLP:conf/fossacs/Chatterjee07,DBLP:conf/stacs/Horn09} for zero-sum Muller games.
Memory bounds are sensitive to the chosen definition of a randomised finite-memory strategy; not all models are equally compact, nor are they equally expressive.
The relative expressiveness of these models is discussed in~\cite{DBLP:journals/iandc/MainR24}.

\subparagraph*{Outline.} 
We provide required definitions in Section~\ref{sec:prelim}.
In Section~\ref{sec:neg}, we present examples that establish the need of randomisation to obtain memoryless SPEs in the class of games we study, and show that memoryless NEs need not exist for Muller objectives outside of the class we treat.
In Section~\ref{sec:games}, we prove our existence result: we first illustrate our construction then present an algorithm to construct memoryless SPEs. 
\section{Preliminaries}\label{sec:prelim}
\subparagraph*{Notation.}
For any $n\geq 1$, we let $\llb n \rrb = \{1, \ldots, n\}$.
For any finite set $A\neq\emptyset$, we let $\mathcal{D}(A)$ denote the set of distributions over $A$, i.e., of functions $\mu\colon A\to [0, 1]$ such that $\sum_{a\in A}\mu(a) = 1$.

\subparagraph*{Arenas.}
Let $n \geq 1$. For each $i \in \llb n \rrb$, we denote by $\Player_i$ the $i$-th player. Let $G = (V,E)$ be a directed graph where $V$ is a finite set of vertices and $E \subseteq V \times V$ is a set of edges. We assume that the graph is deadlock-free, i.e., for all $v \in V$, there exists $v' \in V$ such that $(v,v') \in E$. An $n$-player arena on $G$ is a tuple $A = ((V_i)_{i \in \llb n \rrb}, E)$, where $(V_i)_{i \in \llb n \rrb}$ is a partition of $V$. 

Players take turns choosing the next vertex depending on which player controls the current vertex. This induces an infinite path in the arena, called a play. Formally, a play in $A$ is an infinite sequence of vertices $\rho = \rho_0 \rho_1 \dots\in V^\omega$ such that for all $\ell \in \mathbb{N}$, $(\rho_\ell, \rho_{\ell+1}) \in E$. 
A history is a finite prefix of a play $h = h_0 \dots h_\ell \in V^*$; we denote $h_\ell$ by $\mathrm{last}(h)$. The set of plays (resp. histories) is denoted by $\mathrm{Plays}$ (resp. $\mathrm{Hist}$), and we let $\mathrm{Hist}_i=\mathrm{Hist}\cap V^*V_i$.

Let $C$ be a strongly connected component (SCC) of $G$. We say that $C$ is a \emph{leaf} SCC if there are no edges from $C$ to $V \setminus C$, i.e., once a play enters $C$, it remains in $C$ forever.

\subparagraph*{Strategies.}
Strategies describe how players select edges.
They can use both memory and randomisation in general.
Let $i \in \llb n \rrb$.
Formally, a (randomised) strategy of $\Player_i$ is a function $\sigma_i : \mathrm{Hist}_i \to \mathcal{D}(V)$ such that for all $h\in \mathrm{Hist}_i$ and all $v\in V$, if $\sigma_i(h)(v) > 0$, then $(\mathrm{last}(h), v) \in E$.
A strategy $\sigma_i$ is pure if it uses no randomisation, i.e., if for all $h \in\mathrm{Hist}_i$, $\sigma_i(h)$ is a Dirac distribution.
A strategy $\sigma_i$ is memoryless if for all histories $h$, $h'\in\mathrm{Hist}_i$, $\mathrm{last}(h) = \mathrm{last}(h')$ implies $\sigma_i(h) = \sigma_i(h')$.
We view memoryless strategies as functions $V_i\to\mathcal{D}(V)$.
Strategies that are both memoryless and pure are called positional.
We say that a play $\rho$ is played according to $\sigma_i$ if, whenever $\rho_\ell \in V_i$, we have that $\sigma_i(\rho_0 \dots \rho_\ell)(\rho_{\ell+1}) > 0$.

A strategy profile is a tuple $\sigma = (\sigma_1, \ldots, \sigma_n)$ where, for all $i\in\llb n\rrb$, $\sigma_i$ is a strategy of $\Player_i$.
We write $\mathbb{P}_v^\sigma$ for the distribution over plays induced by a strategy profile $\sigma=(\sigma_1,\dots,\sigma_n)$ from an initial vertex $v\in V$, defined in the usual way (see, e.g.,~\cite{DBLP:journals/iandc/MainR24}).
A strategy profile $\sigma$ is called memoryless (resp.~positional) if all strategies in $\sigma$ are memoryless (resp.~positional).

\subparagraph*{Objectives and games.}
An objective is a measurable subset of $\mathrm{Plays}$.
We say that a play $\rho$ satisfies an objective $\Omega$ if $\rho\in\Omega$.
An $n$-player game $\G = (A, (\Omega_i)_{i \in \llb n \rrb})$ is an arena equipped with a profile of objectives $(\Omega_i)_{i \in \llb n \rrb}$.

We focus on the following three classes of objectives.
 A reachability objective is defined by a target set $T \subseteq V$; a play $\rho$ satisfies the reachability objective for $T$ if it visits $T$.
 Formally, we let $\Reach(T) = \{\rho\in\mathrm{Plays}\mid \exists \ell\geq 0, \rho_\ell\in T\}$.
 A safety objective is defined by a set $S \subseteq V$; a play $\rho$ satisfies the safety objective for $S$ if it never leaves $S$.
 Formally, we let $\Safe(S) = \{\rho\in\mathrm{Plays}\mid \forall\ell\geq 0, \rho_\ell\in S\}$.
 
 For any play $\rho$, let $\mathrm{Inf}(\rho)$ be the set of vertices occurring infinitely often in $\rho$. A Muller objective is defined by a family $\mathcal{F} \subseteq 2^V$. A play $\rho$ satisfies the Muller objective described by $\mathcal{F}$, denoted by $\mathrm{Muller}(\mathcal{F})$, if $\mathrm{Inf}(\rho) \in \mathcal{F}$.
 An objective $\mathrm{Muller}(\mathcal{F})$ is a 0-2 Muller objective if for all $F\in\mathcal{F}$, either all subsets of $F$ are in $\mathcal{F}$ ($F$ is downward closed in $\mathcal{F}$) or all supersets of $F$ are in $\mathcal{F}$ ($F$ is upward closed in $\mathcal{F}$), and $\mathrm{Muller}(\mathcal{F})$ is a 1-3 Muller objective if its complement $\mathrm{Muller}(2^V\setminus\mathcal{F})$ is a 0-2 Muller objective.\todo{James: @Mona, please double check.}
 
\subparagraph*{Nash equilibria.}
Let $\G = (A, (\Omega_i)_{i \in \llb n \rrb})$ be a game and $v\in V$ and let $\sigma$ be a strategy profile. A strategy $\sigma_i'$ of $\Player_i$ is a profitable deviation for $\Player_i$ from $v$ with respect to $\sigma$ if ${\mathbb{P}_{v}^{(\sigma_{-i},\sigma_i')}}(\Omega_i) > {\mathbb{P}_{v}^\sigma}(\Omega_i)$, where $(\sigma_{-i},\sigma_i')$ denotes the profile obtained by replacing the strategy of $\Player_i$ in $\sigma$ by $\sigma_i'$. 
A strategy profile $\sigma$ is a Nash equilibrium (NE) from $v\in V$ if there are no profitable deviations with respect to $\sigma$ from $v$. A strategy profile $\sigma$ is an everywhere NE (ENE) if for every vertex $v \in V$, $\sigma$ is an NE from $v$.

\subparagraph*{Subgame-perfect equilibria.}
Given a possibly empty history $h\in\{\varepsilon\}\cup\mathrm{Hist}$, we define the subgame of $\G$ rooted at $h$ as $\G_h = (A, (\Omega^i_h)_{i \in \llb n \rrb})$, where for all $i \in \llb n \rrb$, we have $\Omega^i_h = \{\rho\in\mathrm{Plays}\mid h\rho\in\Omega^i\}$.
Given a strategy profile $\sigma$ and a history $h$, we let $\sigma_h$ denote the strategy profile defined by $(\tau_1, \ldots, \tau_n)$ such that, for all $i\in\llb n\rrb$ and all histories $h'\in\mathrm{Hist}_i$ such that $hh'$ is a history, $\tau_i(h') = \sigma_i(hh')$.
A strategy profile $\sigma$ is a subgame-perfect equilibrium (SPE) from $v\in V$ in $\G$ if, for all histories $hv'\in\mathrm{Hist}$ starting in $v$, $\sigma^h$ is an NE in $\G_h$ from $v'$. In particular, every SPE is an NE.
 
\section{Impossibility Results}
\label{sec:neg}

Before we introduce the construction of randomised SPEs for any combination of players with reachability, safety, or 0-2 Muller objective, we first show that randomisation is required even if we only consider reachability and safety 
objectives, while 1-3 Muller objectives may not have any memoryless NEs at all.

\subsection{Randomisation}
We illustrate the need of randomisation for reachability (or B\"uchi) and safety (or CoB\"uchi) 
on a simple example.

\begin{figure}[t]
\centering

\begin{minipage}{0.32\textwidth}
\centering
\begin{tikzpicture}[
    scale=0.55,
    transform shape,
    >=Stealth,
    shorten >=1pt,
    auto,
    every state/.style={draw, thick, minimum size=8mm, inner sep=1pt}
]

\tikzset{
    p1/.style={draw, circle, thick, minimum size=10mm, inner sep=1pt, fill=red!20},
    p2/.style={draw, regular polygon, regular polygon sides=6, thick, minimum size=10mm, inner sep=1pt, fill=blue!20}
}

\node[p1] (a) at (30:2.5cm) {$a$};
\node[p2] (b) at (-30:2.5cm) {$b$};
\node[p1] (c) at (270:2.5cm) {$c$};
\node[p2] (d) at (210:2.5cm) {$d$};
\node[p1] (e) at (150:2.5cm) {$e$};
\node[p2] (f) at (90:2.5cm) {$f$};

\path[->, ]

(a) edge[bend left=15] (b)
(b) edge[bend right=15] (a)

(b) edge[bend left=15] (c)
(c) edge[bend right=15] (b)

(c) edge[bend left=15] (d)
(d) edge[bend right=15] (c)

(d) edge[bend left=15] (e)
(e) edge[bend right=15] (d)

(e) edge[bend left=15] (f)
(f) edge[bend right=15] (e)

(f) edge[bend left=15] (a)
(a) edge[bend right=15] (f);

\end{tikzpicture}

\small (a)
\end{minipage}
\hfill
\begin{minipage}{0.32\textwidth}
\centering
\begin{tikzpicture}[
    scale=0.55,
    transform shape,
    >=Stealth,
    shorten >=1pt,
    auto,
    every state/.style={draw, thick, minimum size=8mm, inner sep=1pt}
]

\tikzset{
    p1/.style={draw, circle, thick, minimum size=10mm, inner sep=1pt, fill=red!20},
    p2/.style={draw, regular polygon, regular polygon sides=6, thick, minimum size=10mm, inner sep=1pt, fill=blue!20}
}

\node[p1] (a) at (30:2.5cm) {$a$};
\node[p2] (b) at (-30:2.5cm) {$b$};
\node[p1] (c) at (270:2.5cm) {$c$};
\node[p2] (d) at (210:2.5cm) {$d$};
\node[p1] (e) at (150:2.5cm) {$e$};
\node[p2] (f) at (90:2.5cm) {$f$};

\path[->, thick]

(a) edge[bend right=15] (f)

(b) edge[bend right=15, dotted] (a)
(b) edge[bend right=15] (c)

(c) edge[bend right=15] (b)

(d) edge[bend right=15, dotted] (c)
(d) edge[bend right=15] (e)

(e) edge[bend right=15] (d)

(f) edge[bend right=15, dotted] (e)
(f) edge[bend right=15] (a);
\end{tikzpicture}

\small (b)
\end{minipage}
\hfill
\begin{minipage}{0.32\textwidth}
\centering
\begin{tikzpicture}[
    scale=0.55,
    transform shape,
    >=Stealth,
    shorten >=1pt,
    auto,
    every state/.style={draw, thick, minimum size=8mm, inner sep=1pt}
]

\tikzset{
    p1/.style={draw, circle, thick, minimum size=10mm, inner sep=1pt, fill=red!20},
    p2/.style={draw, regular polygon, regular polygon sides=6, thick, minimum size=10mm, inner sep=1pt, fill=blue!20}
}

\node[p1] (a) at (30:2.5cm) {$a$};
\node[p2] (b) at (-30:2.5cm) {$b$};
\node[p1] (c) at (270:2.5cm) {$c$};
\node[p2] (d) at (210:2.5cm) {$d$};
\node[p1] (e) at (150:2.5cm) {$e$};
\node[p2] (f) at (90:2.5cm) {$f$};

\path[->, thick]

(a) edge[bend left=15] (b)

(b) edge[bend left=15, dotted] (a)
(b) edge[bend right=15] (c)

(c) edge[bend right=15] (b)

(d) edge[bend right=15, dotted] (c)
(d) edge[bend right=15] (e)

(e) edge[bend right=15] (d)

(f) edge[bend right=15, dotted] (e)
(f) edge[bend left=15] (a);
\end{tikzpicture}

\small (c)
\end{minipage}

\caption{A six-player game: circles denote reachability players, hexagons safety players. (a) Game graph. (b) Fixing the direction of the play for reachability players. (c) The play unstable under unilateral deviations (Case (1), of Example~\ref{ex:mixobj}).}

\label{fig:case1}
\end{figure}

\begin{figure}[t]
\centering

\begin{minipage}{0.32\textwidth}
\centering
\begin{tikzpicture}[
    scale=0.55,
    transform shape,
    >=Stealth,
    shorten >=1pt,
    auto,
    every state/.style={draw, thick, minimum size=8mm, inner sep=1pt}
]

\tikzset{
    p1/.style={draw, circle, thick, minimum size=10mm, inner sep=1pt, fill=red!20},
    p2/.style={draw, regular polygon, regular polygon sides=6, thick, minimum size=10mm, inner sep=1pt, fill=blue!20}
}

\node[p1] (a) at (30:2.5cm) {$a$};
\node[p2] (b) at (-30:2.5cm) {$b$};
\node[p1] (c) at (270:2.5cm) {$c$};
\node[p2] (d) at (210:2.5cm) {$d$};
\node[p1] (e) at (150:2.5cm) {$e$};
\node[p2] (f) at (90:2.5cm) {$f$};

\path[->, thick]

(a) edge[bend left=15] (b)

(c) edge[bend right=15] (b)

(d) edge[bend left=15] (e)

(e) edge[bend left=15] (f)
(f) edge[bend left=15] (e);

\end{tikzpicture}

\small (a)
\end{minipage}
\hfill
\begin{minipage}{0.32\textwidth}
\centering
\begin{tikzpicture}[
    scale=0.55,
    transform shape,
    >=Stealth,
    shorten >=1pt,
    auto,
    every state/.style={draw, thick, minimum size=8mm, inner sep=1pt}
]

\tikzset{
    p1/.style={draw, circle, thick, minimum size=10mm, inner sep=1pt, fill=red!20},
    p2/.style={draw, regular polygon, regular polygon sides=6, thick, minimum size=10mm, inner sep=1pt, fill=blue!20}
}

\node[p1] (a) at (30:2.5cm) {$a$};
\node[p2] (b) at (-30:2.5cm) {$b$};
\node[p1] (c) at (270:2.5cm) {$c$};
\node[p2] (d) at (210:2.5cm) {$d$};
\node[p1] (e) at (150:2.5cm) {$e$};
\node[p2] (f) at (90:2.5cm) {$f$};

\path[->, thick]
(a) edge[bend left=15] (b)

(c) edge[bend right=15] (b)

(d) edge[bend left=15] [dotted] (e)
(d) edge[bend right=15] (c)

(e) edge[bend left=15] (f)
(f) edge[bend left=15] (e);
\end{tikzpicture}

\small (b)
\end{minipage}
\hfill
\begin{minipage}{0.32\textwidth}
\centering
\begin{tikzpicture}[
    scale=0.55,
    transform shape,
    >=Stealth,
    shorten >=1pt,
    auto,
    every state/.style={draw, thick, minimum size=8mm, inner sep=1pt}
]

\tikzset{
    p1/.style={draw, circle, thick, minimum size=10mm, inner sep=1pt, fill=red!20},
    p2/.style={draw, regular polygon, regular polygon sides=6, thick, minimum size=10mm, inner sep=1pt, fill=blue!20}
}

\node[p1] (a) at (30:2.5cm) {$a$};
\node[p2] (b) at (-30:2.5cm) {$b$};
\node[p1] (c) at (270:2.5cm) {$c$};
\node[p2] (d) at (210:2.5cm) {$d$};
\node[p1] (e) at (150:2.5cm) {$e$};
\node[p2] (f) at (90:2.5cm) {$f$};

\path[->, thick]
(a) edge[bend left=15] (b)

(c) edge[bend right=15] (b)

(d) edge[bend right=15] (c)

(e) edge[bend right=15] (d)
(f) edge[bend right=15] (e);
\end{tikzpicture}

\small (c)
\end{minipage}

\caption{Case (2) of Example~\ref{ex:mixobj}: fixing $f$ counterclockwise forces $d$ to play clockwise, yielding an unstable profile under unilateral deviations.}
\label{fig:case2}
\end{figure}

\begin{example}
\label{ex:mixobj}
We now consider a game where six players are standing in a circle, and in each move they hand the token to a neighbour (see Fig.~\ref{fig:case1}a). 
Players $a$, $c$, and $e$ have the individual reachability objective that the token shall leave their respective neighbourhood (their own nodes + the two neighbouring nodes), or the B\"uchi objective that it shall do so infinitely often.
The other three players, $b$, $d$, and $f$, have the individual safety objective never to leave their respective neighbourhood, or the CoB\"uchi objective to eventually always stay in this neighbourhood.

We now examine whether a positional ENE exists for this example.
We start with the strategies of the reachability/B\"uchi players and observe that, modulo renaming, there are only two such strategies:

\begin{enumerate}
    \item All reachability/B\"uchi
    players pass the token in the same direction (w.l.o.g.\ counter-clockwise).

Figure \ref{fig:case1}b shows this case together with the only optimal counter-strategy of the safety/CoB\"uchi players to always play back/clockwise.
(All other positional strategies of the individual safety/CoB\"uchi 
players can be improved upon and are therefore not stable).
But for this strategy of the safety/CoB\"uchi 
players, none of the strategies of the reachability/B\"uchi players is stable: they lose from their vertices, but would win if they played clockwise. (violates the NE condition).

\item Two reachability/B\"uchi 
players pass the token toward the same player; w.l.o.g.\ we consider the case where $a$ and $c$ play to $b$, while $e$ plays to $f$. (All other such strategies are equivalent to this one modulo renaming.)

Player $f$ now has only one stable policy (to move back to $e$), and in an ENE, it has to take it.
We therefore fix this decision.

If $d$ also plays to $e$ (Figure \ref{fig:case2}a), then the respective reachability/B\"uchi player can win from both $a$ and $c$ by changing their policy.
But for either positional strategy played from $b$, either the reachability/B\"uchi player owning $a$ or the reachability/B\"uchi player owning $c$ loses from that vertex.
Her choice is therefore not stable, so this cannot happen in an ENE.

If $d$ instead plays to $c$, (Figure \ref{fig:case2}b), then $e$ can improve her outcome by turning to $d$ instead (Figure \ref{fig:case2}c), so this also cannot result in an ENE.
\end{enumerate} 
\end{example}

We can extend the example from ENEs to SPEs by simply adding a seventh vertex for a seventh player.
This vertex is initial, safe for all safety players and not a target for either reachability player, and has all six vertices as successors, then the same argument works---it is just more difficult to draw and the additional initial vertex distracts from the core argument.

\begin{theorem}\label{thm:noPure}
Games with reachability or B\"uchi objectives as well as safety or CoB\"uchi objectives may neither allow for positional ENE nor for positional SPE.
\end{theorem}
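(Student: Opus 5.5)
The proof takes Example~\ref{ex:mixobj} as the witness and makes its case analysis exhaustive. The arena is the six-cycle $a,b,c,d,e,f$ with edges in both directions. Player $a$ owns vertex $a$, and so on. The reachability players $a,c,e$ have as target the complement of their closed neighbourhood; for example, $a$'s target is $\{c,d,e\}$. The safety players $b,d,f$ have as safe set their closed neighbourhood. The Büchi/CoBüchi variants behave the same way because, under a positional profile, every play is a lasso ending in a cycle; I will check that each win/loss claim below also holds for the infinitely-often and eventually-always versions. Since every vertex has exactly two successors, there are only $2^6$ positional profiles. A brute-force check would therefore suffice, but the intended proof is the symmetry-reduced case analysis in the example, which I will make rigorous.

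\textbf{Step 1 (ENE).} Fix a positional profile. The strategies of $a,c,e$ are a choice of clockwise or counter-clockwise at each vertex. Up to the rotation and reflection symmetries of the cycle that preserve ownership types, either all three point the same way, or exactly two point toward a common safety vertex and the third points toward another safety vertex.

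In case (1), assume all point counter-clockwise. I will show that each safety player's only response not improvable from its own vertex is to bounce back. Any other choice lets the token leave that player's neighbourhood, and deviating to bounce back keeps it in forever. Against the bouncing safety players, each reachability player is trapped in a two-cycle with its counter-clockwise neighbour. Switching to clockwise leads to a safety player that bounces the token to the next reachability vertex, which leaves the deviator's neighbourhood. That is a profitable deviation from the deviator's own vertex.

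In case (2), assume $a,c\to b$ and $e\to f$. Player $f$ must bounce back to $e$, since $f\to a$ lets $a$ send the token to $b$ and out of $f$'s neighbourhood. If $d\to e$, then from $b$ the token goes either to $a$ or to $c$ and stays in $\{a,b\}$ or $\{b,c\}$ forever. One of $a,c$ therefore loses from $b$, and moving away from $b$ is profitable for it. If $d\to c$, then $e$ loses from $e$ by cycling with $f$, and $e\to d\to c\to b$ wins for $e$.

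\textbf{Step 2 (SPE).} Add a seventh vertex $s$ owned by a seventh player with a trivial objective. The vertex $s$ is initial, has edges to all six cycle vertices and no incoming edges, lies in every safety set, and belongs to no target. Any positional SPE from $s$ restricts, on each subgame rooted at the history $s\,v$, to a positional NE from $v$ in the original game. Note that the objectives of the safety players are unaffected by the prefix $s$. For the reachability players the prefix adds nothing, because $s$ is not a target. Hence the restriction is a positional ENE of the six-player game, contradicting Step 1.

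\textbf{Main obstacle.} The delicate part is the exhaustiveness of the symmetry reduction in Step 1 and the claim that the forced responses are truly forced. Each "this choice is unstable" argument has to exhibit a concrete profitable deviation from a specific start vertex. The fallback is to enumerate the 64 profiles mechanically, grouped by the reachability players' choices.
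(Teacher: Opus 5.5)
Your proposal is correct and follows the paper's own argument. You use the six-player cycle example with the same two-case symmetry reduction (all reachability players pointing the same way, or two of them pointing at a common safety vertex), the same forced bounce-back responses of the safety players and the same profitable deviations, followed by the seventh initial vertex to pass from ENEs to SPEs. Your justification that the two cases are exhaustive under the ownership-preserving symmetries, and that the subgames after the history $s\,v$ carry exactly the original objectives, fills in details the paper leaves implicit.
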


\subsection{1-3 Muller games}
We now show that Muller games where all rejecting infinity sets are upwards \emph{or} downwards closed may not have \emph{any} memoryless NE.

\begin{figure}[t]
\centering
\begin{tikzpicture}[scale=0.8]

\fill[red!10, opacity=0.5]
(-3,4) -- (-3,2.8)
.. controls (0,2.5) .. (3,2)
-- (3,4) -- cycle;

\node at (0,3.65) {\small memoryless NE may not exist};

\fill[green!10, opacity=0.5]
(-3,-1) -- (-3,2.8)
.. controls (0,2.5) .. (3,2)
-- (3,-1) -- cycle;

\node at (0,-0.6) {\small memoryless NE always exist};

\node (g) at (0,3.2) {$\vdots$};
\node (e) at (-2,2.4) {$\{0,1,2\}$};
\node (f) at ( 2,2.4) {$\{1,2,3\}$};
\node (d) at ( 0,1.6) {$\{0,1,2\} \cap \{1,2,3\}$};

\node (b) at (-2,0.8) {$\{0,1\}$};
\node (c) at ( 2,0.8) {$\{1,2\}$};
\node (a) at ( 0,0) {$\{0,1\}\cap\{1,2\}$};

\draw (d) -- (f);
\draw (d) -- (e);

\draw (d) -- (f);
\draw (d) -- (e);

\draw (f) -- (g);
\draw (e) -- (g);

\draw (b) -- (d);
\draw (c) -- (d);

\draw (a) -- (c);
\draw (a) -- (b);

\draw[dashed, thick, gray] (-2.5,2.8) .. controls (0,2.5) .. (2.5,2);

\end{tikzpicture}
\caption{The Mostowski hierarchy. For all levels below the separator, there exists a memoryless (randomised) ENE and SPE.}
\end{figure}
\label{fig:mostowski}

For this, we first very briefly outline the Mostowski hierarchy~\cite{DBLP:conf/sct/Mostowski84} for Muller objectives (see Figure~\ref{fig:mostowski}).
The 0-2 class is those objectives, where all winning sets are up- or downward closed, such as parity objectives with colours 0,1,2.
The 1-3 class is the reverse: all losing sets are up- or downward closed, such as parity objectives with colours 1,2,3.

While we will treat the 0-2 class in the algorithm we introduce in the next section, we now provide an example, where two players with Muller objectives from the 1-3 class do not have any memoryless NE.

\begin{figure}[t]
\centering

\begin{tikzpicture}[
    >=Stealth,
    shorten >=1pt,
    thick
]
\tikzset{
    p1/.style={draw, circle, thick, minimum size=8mm, inner sep=1pt, fill=red!20},
    p2/.style={draw, regular polygon, regular polygon sides=6, thick, minimum size=8mm, inner sep=1pt, fill=blue!20},
    mid/.style={draw, circle, minimum size=5mm, inner sep=1pt}
}
\node[p1] (1) at (0,0) {};
\node[p2] (2) at (4,0) {};

\node[mid] (a) at (2,0.8) {$a$};
\node[mid] (b) at (2,0.25) {$b$};
\node[mid] (c) at (2,-0.25) {$c$};
\node[mid] (d) at (2,-0.8) {$d$};

\path[->]
(1) edge[bend left=18] (a)
(1) edge[bend left=8]  (b)
(a) edge[bend left=18] (2)
(b) edge[bend left=8]  (2)

(2) edge[bend left=8]  (c)
(2) edge[bend left=18] (d)
(c) edge[bend left=8]  (1)
(d) edge[bend left=18] (1);

\end{tikzpicture}

\caption{A 1-3 Muller game with no memoryless NEs. Circles represent $\Player_1$ and hexagons represent $\Player_2$.}
\label{fig:muller-example}
\end{figure}

\begin{example}

Consider a game with two players (circle and hexagon), where each player has two outgoing transitions to the other vertex, as depicted in Figure~\ref{fig:muller-example}. The players have almost opposing objectives, except that both players lose if all four middle vertices are seen infinitely often.
If three are seen infinitely often, the player for whom only one of her successor vertices is seen infinitely often wins.
If two of the middle vertices are seen infinitely often, 
$\Player_1$ wins for vertices $a$,$c$ and $b$,$d$, while $\Player_2$ wins for vertices $a$,$d$ and $b$,$c$.

This way, if both players use randomisation, then both benefit from deviating, so this is no NE.
If both players play pure, the losing player can win by changing to her other pure choice, so this cannot be an NE.
If one player plays randomised while the other plays pure, the randomised player loses, but can change to a winning pure choice, so this cannot be an NE either.
\end{example}

We note that the losing sets with three and four of the middle vertices are upwards closed for either player,
while the losing sets with two vertices are downwards closed for either player, so that these are 1-3 Muller objectives.

\begin{theorem}\label{thm:1-3}
Games with 1-3 Muller objectives may not have NEs, even if we restrict the games to mutually exclusive 1-3 Muller objectives and two players.
\end{theorem}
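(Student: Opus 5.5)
The plan is to make the example preceding the statement fully precise. Two things need checking. First, every memoryless profile in the game of Figure~\ref{fig:muller-example} admits a profitable deviation. Second, the two objectives can be completed to families over all of $2^V$ that are mutually exclusive and 1-3.

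\textbf{The arena and the strategy space.} The arena has a $\Player_1$ vertex $u$ with successors $a,b$ and a $\Player_2$ vertex $w$ with successors $c,d$. The vertices $a,b$ each have the unique successor $w$, and $c,d$ each have the unique successor $u$; their owner is irrelevant. A memoryless profile is therefore described by two numbers: $p\in[0,1]$, the probability that $\Player_1$ moves from $u$ to $a$, and $q\in[0,1]$, the probability that $\Player_2$ moves from $w$ to $c$.

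\textbf{The infinity set under a memoryless profile.} Under such a profile the play is a finite Markov chain. Every vertex in the support of the profile lies on a cycle through $u$ and $w$, so the chain restricted to the support is irreducible. Consequently, from any initial vertex, almost surely $\mathrm{Inf}(\rho)=\{u,w\}\cup M$, where $M$ is the set of middle vertices chosen with positive probability. It follows that each objective holds with probability $0$ or $1$. To exhibit a profitable deviation, it therefore suffices to show that a player who loses almost surely can switch to a positional strategy under which she wins almost surely.

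\textbf{The winning condition.} For a set $X\subseteq V$, the winner depends only on $X\cap\{a,b,c,d\}$:
\begin{itemize}
\item If $X$ contains all four middle vertices, or at most one of them, both players lose.
\item If $X$ contains three middle vertices, the player exactly one of whose successors lies in $X$ wins, and the other loses.
\item If $X\cap\{a,b,c,d\}$ is one of $\{a,c\},\{b,d\}$, then $\Player_1$ wins; if it is one of $\{a,d\},\{b,c\}$, then $\Player_2$ wins.
\item If $X\cap\{a,b,c,d\}$ is $\{a,b\}$ or $\{c,d\}$, both lose. Such sets never occur as infinity sets.
\end{itemize}
The two winning families are disjoint by construction, so the objectives are mutually exclusive.

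\textbf{Case analysis on the profile.} There are three cases.
\begin{itemize}
\item If $p,q\in(0,1)$, all four middle vertices are seen, so $\Player_1$ loses almost surely. If $\Player_1$ deviates to a pure move, exactly three middle vertices are seen and only one of her successors is among them, so she wins.
\item If both strategies are pure, exactly one player wins, since the pairs $\{a,c\},\{b,d\}$ and $\{a,d\},\{b,c\}$ are complementary. Fixing the winner's choice, switching the loser's pure choice turns the pair into the other pair containing the winner's vertex, and that pair is won by the former loser.
\item If exactly one strategy is randomised, three middle vertices are seen and the pure player wins, so the randomised player loses. Given the other player's fixed move, the randomised player has one pure choice that produces a pair winning for her, and switching to it is profitable.
\end{itemize}
Hence no memoryless profile is an NE from $u$ or from $w$.

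\textbf{Verifying the 1-3 property.} This is the step needing the most care. I must show that every losing set is upward or downward closed within the losing family, and by symmetry it suffices to argue for $\Player_1$. Since the condition depends only on the middle intersection, adding or removing $u$ or $w$ never changes the outcome.
\begin{itemize}
\item Sets with four middle vertices are losing, and all their supersets are losing, so they are upward closed.
\item The three-middle losing sets for $\Player_1$ are those containing both $a$ and $b$. Each of their supersets contains $a$ and $b$ together with three or four middle vertices, hence is losing, so they are upward closed.
\item Sets with at most one middle vertex are losing, and so are all their subsets, so they are downward closed.
\item The two-middle losing sets for $\Player_1$ have middle intersection $\{a,d\}$, $\{b,c\}$, $\{a,b\}$ or $\{c,d\}$. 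All their subsets have at most two middle vertices, and any subset with exactly these two is the same case again; otherwise it has at most one middle vertex and is losing. So they are downward closed.
\end{itemize}
The case of $\Player_2$ is symmetric. This establishes that the objectives are 1-3, and completes the proof.
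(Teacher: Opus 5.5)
Your proposal is correct and follows the paper's own argument: the same two-player example (read, like the paper, as a statement about memoryless NEs), the same three-way case split (both randomised, both pure, exactly one randomised), and the same observation that losing sets with three or four middle vertices are upward closed while those with at most two are downward closed. You make explicit what the paper leaves implicit: the Markov-chain argument that the infinity set is almost surely $\{u,w\}\cup M$, so every objective holds with probability $0$ or $1$. You also complete both winning families on sets that never arise as infinity sets (at most one middle vertex, or middle part $\{a,b\}$ or $\{c,d\}$), so that the 1-3 condition holds over all of $2^V$.
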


\section{Memoryless Nash Equilibria in Games with Reachability, Safety, and 0-2 Muller Objectives}
In this section, we study the existence of memoryless NE for games with reachability, safety, and 0-2 Muller objectives. We first illustrate the principles and ideas of our construction for memoryless randomised NE for reachability, safety, and 0-2 Muller objectives. 
We then give the proof of the main result in Section~\ref{sec:algo}.

\label{sec:games}
\newcommand{\jamesColor}[1]{\textcolor{black}{#1}}
\newcommand{\monaColor}[1]{\textcolor{blue}{#1}}

\subsection{Reachability Objectives}

\begin{example}\label{ex:reach}
Consider the four-player game with reachability objectives played on the arena of Figure~\ref{fig:reachability-example}. 
The main idea is to incrementally \jamesColor{define memoryless strategies in a way that allows as many players as possible to win}.
A vertex is considered evaluated once a move is assigned to it.
\jamesColor{When evaluating a vertex, we record} the set of players who will win when the game starts at this vertex.

For reachability games, \jamesColor{desirable strategies} are particularly easy to identify in leaf SCCs: a \emph{maximally mixed} strategy (i.e., such that all outgoing transitions of every vertex are played uniformly at random) will almost-surely visit all vertices in the leaf component infinitely often, thus in particular \jamesColor{almost-surely} satisfy any reachability objective with a target in the leaf.
At the same time, players with no target in the leaf component cannot win at all.

Let $C$ be the leaf SCC in the green area of Figure~\ref{fig:reachability-example}. 
Playing uniformly at random constitutes an NE from every vertex in $C$, we fix this strategy and can mark $t_2$, $v_3$, and $t_4$ as winning for $\Player_2$ and $\Player_4$, but not for the other two players.

Once we have evaluated vertices, we incrementally construct a strategy profile.
As part of this process, we remove edges to evaluated vertices that are not marked as winning for the player controlling the outgoing vertex.
Intuitively, such edges are not interesting for this player. 
To keep track of these modifications, we maintain a reduced arena containing exactly the edges that have not been removed.

In order to do this, we consider, for each unevaluated vertex $v \in V_i$ its successors.
\begin{enumerate}
    \item \label{case:reach:1}
If there exists a successor $v'$ of $v$ that is marked as winning for $\Player_i$, we fix the edge $v \to v'$ \jamesColor{in the strategy we build}.
This is safe to do because we have already established that $\Player_i$ almost-surely wins from $v'$, thus $\Player_i$ will almost-surely win from $v$ this way; no other successor can provide a better result, and the decision to move to $v$ will therefore be stable.
We then mark $v$ as winning for all players who are winning from $v'$, as well as for those whose target is $v$. For example, consider $v_5$ owned by $\Player_4$, with successors $v_3$ and $t_4$, both are marked as winning for her.
We fix one of them (here, $v_5 \to t_4$) and mark $v_5$ as winning for $\Player_2$ and $\Player_4$. 

\item\label{case:reach:2}
If $v$ has more than one successor, and among them is an evaluated successor $v'$ that is marked as losing for $\Player_i$, we remove the edge from $v \to v'$.
This is safe to do because we have already established that we lose from that successor; so whichever of the remaining successors will be chosen, moving instead to $v'$ cannot be advantageous for $\Player_i$; not considering this edge in the future cannot lead to instability.

For example, consider $v_2$ owned by $\Player_3$, with successors $v_1$, $v_3$, and $v_5$. Since $\Player_3$ does not win from the evaluated vertices $v_3$ and $v_5$, we can successively remove the edges $v_2 \to v_3$ and $v_2 \to v_5$. This leaves $v_2$ with a single successor.
While we cannot evaluate $v_2$ yet (as $v_1$ is not evaluated), we have simplified the game.

\item \label{case:reach:3}
If $v$ has only one successor $v'$ and that successor is evaluated, we fix the edge $v\to v'$ and evaluate $v$ as in Rule~\ref{case:reach:1}.
This is safe to do as there is no other edge left to consider.
\end{enumerate}

To continue with the example, we could next 
evaluate $t_1$, owned by $\Player_2$, fix the edge to $t_4$ (i.e., $t_1 \to t_4$), and mark $t_1$ as winning for $\Player_1$, $\Player_2$, and $\Player_4$ by Rule~\ref{case:reach:1}.
We continue by evaluating $t_3$, $v_1$, $v_2$, $v_4$, and $v_0$. The resulting strategy profile is depicted in Figure~\ref{fig:reachability-example}, where solid arrows represent the selected edges \jamesColor{(with associated probabilities)}. 

\jamesColor{The vertices may also be evaluated in a different order, i.e., the construction is nondeterministic.
For instance, Rule~\ref{case:reach:1} may offer different successor vertices to choose from while Rule~\ref{case:reach:2} may offer different edges to drop.
Therefore, the approach may yield several strategy profiles}.

If unevaluated vertices remain and none of Rules~\ref{case:reach:1}-\ref{case:reach:3} apply, all unevaluated vertices have only unevaluated successors.
Due to the edge deletions, there is
\jamesColor{a new leaf SCC consisting of unevaluated vertices}; we extend the strategy by letting the players play uniformly at random within this leaf SCC.
This phenomenon occurs in the game of Figure~\ref{fig:reachability-random}: after evaluating the leaf component $\{v_3, t_1, t_4\}$ and $v_5$, and removing the edges from $v_2$ and $t_1$ to the aforementioned vertices (their targets cannot be reached using them), a new leaf emerges in the reduced arena (Figure~\ref{fig:reachability-reduced}).
\end{example}

\begin{figure}[t]
\centering
\begin{tikzpicture}[
    scale=0.7,
    transform shape,
    >=Stealth,
    shorten >=1pt,
    auto,
    node distance=1.6cm,
    every state/.style={draw, thick, minimum size=8mm, inner sep=1pt}
]

\tikzset{
    p1/.style={draw, circle, thick, minimum size=10mm, inner sep=1pt},
    p2/.style={draw, rectangle, thick, minimum size=9mm, inner sep=1pt},
    p3/.style={draw, diamond, thick, minimum size=10mm, inner sep=1pt},
    p4/.style={draw, regular polygon, regular polygon sides=6, thick, minimum size=10mm, inner sep=1pt}
}

\node[p1] (0) {$v_0$};
\node[p2, above right=of 0] (1) {$v_1$};
\node[p3, right=of 1] (2) {$v_2$};
\node[p1, fill = green!20, right=of 2] (3) {$v_3$};

\node[p4, fill = green!20, below right=of 3] (9) {$t_2$};
\node[p3, fill = green!20,  below left=of 9] (8) {$t_4$};

\node[p2,left=of 8] (7) {$t_1$};
\node[p3, left=of 7] (6) {$v_4$};

\node[p4, below left=of 2] (4) {$t_3$};
\node[p4, right=of 4] (5) {$v_5$};

\node[draw, dashed, thick, rounded corners, fit=(3)(8)(9), inner sep=6pt] {};

\path[->, thick]
    (0) edge[bend left=15] (1)
    (0) edge[bend right=15] [dotted] (6)

    (1) edge (4)
    (1) edge[bend left=18] [dotted](2)
    
    (2) edge[bend left=18]  (1)
    (2) edge[dotted] (3)
    (2) edge[dotted] (5)

    (3) edge node[midway,left]{$\frac{1}{2}$} (8)
    (3) edge [bend left=18] node[midway ,above=4pt,right]{$\frac{1}{2}$} (9)

    (4) edge [dotted](0)
    (4) edge [dotted](2)
    (4) edge [bend left=15] (7)

    (5) edge (8)
    (5) edge[dotted] (3)

    (6) edge [dotted](7)
    (6) edge (4)

    (7) edge[bend left=15][dotted] (4)
    (7) edge (8)

    (8) edge[bend left=18] (9)

    (9) edge [bend left=18] node[midway ,below=4pt,left]{$\frac{1}{2}$} (3)
    (9) edge[bend left=18] node[midway ,below=4pt,right]{$\frac{1}{2}$} (8);

\end{tikzpicture}
\caption{A four-player reachability game where $\Player_1$ is represented by circles, $\Player_2$ by squares, $\Player_3$ by diamonds, and $\Player_4$ by hexagons. For each $i$, $t_i$ denotes the target of player $i$.
Edges that receive no probability weight in our example run of the algorithm are shown as dotted.}
\label{fig:reachability-example}
\end{figure}

\begin{figure}[t]
\centering
\begin{tikzpicture}[
    scale=0.7,
    transform shape,
    >=Stealth,
    shorten >=1pt,
    auto,
    node distance=1.6cm,
    every state/.style={draw, thick, minimum size=8mm, inner sep=1pt}
]

\tikzset{
    p1/.style={draw, circle, thick, minimum size=10mm, inner sep=1pt},
    p2/.style={draw, rectangle, thick, minimum size=9mm, inner sep=1pt},
    p3/.style={draw, diamond, thick, minimum size=10mm, inner sep=1pt},
    p4/.style={draw, regular polygon, regular polygon sides=6, thick, minimum size=10mm, inner sep=1pt}
}

\node[p1] (0) {$v_0$};
\node[p2, above right=of 0] (1) {$v_1$};
\node[p3, right=of 1] (2) {$v_2$};
\node[p1, fill = green!20, right=of 2] (3) {$v_3$};

\node[p4, fill = green!20, below right=of 3] (9) {$t_{1}$};
\node[p3, fill = green!20,  below left=of 9] (8) {$t_4$};

\node[p2, left=of 8] (7) {$t_1$};
\node[p3, left=of 7] (6) {$v_4$};

\node[p4, below left=of 2] (4) {$t_{23}$};
\node[p4, right=of 4] (5) {$v_5$};

 {};

\path[->, thick]
    (0) edge[bend left=15]  (1)
    (0) edge[bend right=15] (6)

    (1) edge (4)
    (1) edge[bend left=18]  (2)
    
    (2) edge[bend left=18]  (1)
    (2) edge (3)
    (2) edge (5)

    (3) edge node[midway,left]{$\frac{1}{2}$} (8)
    (3) edge [bend left=18] node[midway ,above=4pt,right]{$\frac{1}{2}$} (9)

    (4) edge  (0)
    (4) edge (2)
    (4) edge [bend left=15] (7)

    (5) edge (8)
    (5) edge (7)
    
    (6) edge (7)
    (6) edge (4)

    (7) edge[bend left=15](4)
    (7) edge (8)

    (8) edge[bend left=18] (9)

    (9) edge [bend left=18] node[midway ,below=4pt,left]{$\frac{1}{2}$} (3)
    (9) edge[bend left=18] node[midway ,below=4pt,right]{$\frac{1}{2}$} (8);

\end{tikzpicture}
\caption{A four-player reachability game, where $t_i,$ denotes a target of $\Player_i$ and $t_{ij}$ denotes a target shared by $\Player_i$ and $\Player_j$.}
\label{fig:reachability-random}
\end{figure}

\begin{figure}[t]
\centering
\begin{tikzpicture}[
    scale=0.7,
    transform shape,
    >=Stealth,
    shorten >=1pt,
    auto,
    node distance=1.6cm,
    every state/.style={draw, thick, minimum size=8mm, inner sep=1pt}
]

\tikzset{
    p1/.style={draw, circle, thick, minimum size=10mm, inner sep=1pt},
    p2/.style={draw, rectangle, thick, minimum size=9mm, inner sep=1pt},
    p3/.style={draw, diamond, thick, minimum size=10mm, inner sep=1pt},
    p4/.style={draw, regular polygon, regular polygon sides=6, thick, minimum size=10mm, inner sep=1pt}
}

\node[p1, fill = gray!20] (0) {$v_0$};
\node[p2, fill = gray!20, above right=of 0] (1) {$v_1$};
\node[p3, fill = gray!20, right=of 1] (2) {$v_2$};
\node[p1, fill = green!20, right=of 2] (3) {$v_3$};

\node[p4, fill = green!20, below right=of 3] (9) {$t_{1}$};
\node[p3, fill = green!20,  below left=of 9] (8) {$t_4$};

\node[p2, fill = gray!20, left=of 8] (7) {$t_1$};
\node[p3, fill = gray!20, left=of 7] (6) {$v_4$};

\node[p4, fill = gray!20, below left=of 2] (4) {$t_{23}$};
\node[p4, right=of 4] (5) {$v_5$};

 {};

\path[->, thick]
    (0) edge[bend left=15] node[midway ,above=4pt] {$\frac{1}{2}$} (1)
    (0) edge[bend right=15] node[midway ,below=4pt, left] {$\frac{1}{2}$} (6)

    (1) edge node[midway ,above=4pt, below=2pt, left] {$\frac{1}{2}$} (4)
    (1) edge[bend left=18] node[midway ,above=2pt] {$\frac{1}{2}$} (2)
    
    (2) edge[bend left=18]  (1)
    (2) edge [dotted](3)
    (2) edge [dotted](5)

    (3) edge node[midway,left]{$\frac{1}{2}$} (8)
    (3) edge [bend left=18] node[midway ,above=4pt,right]{$\frac{1}{2}$} (9)

    (4) edge node[midway ,below=1pt] {$\frac{1}{3}$} (0)
    (4) edge node[midway ,left=2pt] {$\frac{1}{3}$}(2)
    (4) edge [bend left=15] node[midway ,above=4pt,right] {$\frac{1}{3}$}(7)

    (5) edge (8)
    (5) edge [dotted](7)
    
    (6) edge node[midway ,below] {$\frac{1}{2}$}(7)
    (6) edge node[midway ,below=4pt,right] {$\frac{1}{2}$} (4)

    (7) edge[bend left=15](4)
    (7) edge [dotted](8)

    (8) edge[bend left=18] (9)

    (9) edge [bend left=18] node[midway ,below=4pt,left]{$\frac{1}{2}$} (3)
    (9) edge[bend left=18] node[midway ,below=4pt,right]{$\frac{1}{2}$} (8);

\end{tikzpicture}
\caption{The reduced arena of (Figure~\ref{fig:reachability-random}) after removing edges (removed edges are shown as dotted lines) from evaluated vertices; a new leaf SCC emerges (highlighted in gray).}
\label{fig:reachability-reduced}
\end{figure}

\subsection{Safety Objectives}
While playing maximally mixed in leaf SCCs is optimal for players with reachability objectives, it is the worst opponent for safety players -- which curiously provides a different form of stability.

\begin{example}\label{ex:safety}
Let us consider the four-player game played on the arena shown in Figure~\ref{fig:safety-example}. Each player $\Player_i$ is associated with a set of safe vertices $S_i \subseteq V$, as depicted in the figure.

Similarly to \jamesColor{Example~\ref{ex:reach}}, we start from the leaf SCC. 
$\Player_4$ is not winning under uniform random play, but can win unilaterally from $s_{124}$ by fixing her strategy to move to $s_{24}$; we fix this edge and remove the edge to $s_3$.
This is safe to do because it guarantees that $\Player_4$ is winning regardless of how the other players play, as she wins surely against a maximally mixed strategy.
(This is the stability mentioned earlier: if the other players narrow down their options later, she will still win, so that these choices remain stable.)
We do \emph{not} evaluate at this point, we only fix the strategy of $\Player_4$.

This strategy for $\Player_4$ changes the structure of the game, with the new (only) leaf consisting of $s_{24}$ and $s_{124}$.
In this leaf component, there is no player who loses when all vertices in the leaf component are visited infinitely often, but can win unilaterally anywhere in this leaf component if all other players play this way.
We now fix this strategy for the complete leaf, knowing that it is stable, and evaluate it by marking $s_{24}$ and $s_{124}$ as winning for $\Player_2$ and $\Player_4$. 

We then select winning successors according to Rule~\ref{case:reach:1}, discard losing successors (where alternatives remain) according to Case~\ref{case:reach:2}, and evaluate vertices where the only successor is evaluated according to (Rule~\ref{case:reach:3}), much like for the reachability case.

In this example, we could evaluate $s_{3}$ next, as it has only one successor.
Vertex $s_{3}$ is only safe for $\Player_3$, and its only successor is marked as winning for $\Player_2$ and $\Player_4$, so that no player wins from $s_{3}$.
The remaining vertices can be evaluated in the following order: $v_3$, $s_{12}$, $v_2$,$s_{134}$, $v_3$, $s_{14}$, and $v_0$, as in Example~\ref{ex:reach}.
We depict a memoryless strategy profile $\sigma$ obtained in this way by solid arrows, while dotted arrows denote unused edges. 
\jamesColor{This profile is an ENE and an SPE from all vertices.}
\end{example}

Like in the reachability case, new leaves can occur during evaluation, and we turn to their evaluation when none of the Rules~\ref{case:reach:1}-\ref{case:reach:3} apply.

\begin{figure}[t]
\centering
\begin{tikzpicture}[
    scale=0.7,
    transform shape,
    >=Stealth,
    shorten >=1pt,
    auto,
    node distance=1.6cm,
    every state/.style={draw, thick, minimum size=8mm, inner sep=1pt}
]

\tikzset{
    p1/.style={draw, circle, thick, minimum size=10mm, inner sep=1pt},
    p2/.style={draw, rectangle, thick, minimum size=9mm, inner sep=1pt},
    p3/.style={draw, diamond, thick, minimum size=10mm, inner sep=1pt},
    p4/.style={draw, regular polygon, regular polygon sides=6, thick, minimum size=10mm, inner sep=1pt}
}

\node[p1] (0) {$v_0$};
\node[p1, above right=of 0] (1) {$s_{14}$};
\node[p3, right=of 1] (2) {$s_{134}$};
\node[p2, fill = green!20, right=of 2] (3) {$s_{24}$};

\node[p4, fill = green!20, below right=of 3] (9) {$s_{124}$};
\node[p3, fill = green!20,  below left=of 9] (8) {$s_{3}$};

\node[p2,left=of 8] (7) {$s_{12}$};
\node[p3, left=of 7] (6) {$v_1$};

\node[p4, below left=of 2] (4) {$v_2$};
\node[p4, right=of 4] (5) {$v_3$};

\node[draw, dashed, thick, rounded corners, fit=(3)(8)(9), inner sep=6pt] {};

\path[->, thick]
    (0) edge[bend left=15] [dotted](1)
    (0) edge[bend right=15] (6)

    (1) edge (4)
    (1) edge[bend left=18] [dotted](2)
    
    (2) edge[bend left=18] [dotted] (1)
    (2) edge [dotted] (3)
    (2) edge (5)

    (3) edge[loop above] node[midway,above]{$\frac{1}{2}$} (3)
    (3) edge [bend left=18] node[midway ,above=4pt,right]{$\frac{1}{2}$} (9)

    (4) edge [dotted](0)
    (4) edge (2)
    (4) edge [bend left=15] [dotted](7)

    (5) edge [dotted](8)
    (5) edge (3)

    (6) edge [dotted](7)
    (6) edge (4)

    (7) edge[bend left=15] (4)
    (7) edge [dotted](8)

    (8) edge[bend left=18] (9)

    (9) edge[bend left=15] (3)
    (9) edge[bend left=15] [dotted](8);

\end{tikzpicture}
\caption{A four-player safety game. Vertices $s_i$, $s_{ij}$, and $s_{ijk}$ are safe for $\Player_i$,$\Player_j$ and $\Player_k$ respectively, and $v_i$ is safe for all players. Edges that receive no probability weight in our example run of the algorithm are shown as dotted.}
\label{fig:safety-example}
\end{figure}

\subsection{0-2 Muller Objectives}
While we have seen that 1-3 Muller objectives may not have any memoryless NE, 0-2 Muller objectives share enough properties with reachability that we can treat them similarly.

Maximally mixing strategies in leaves are at the same time optimal for upwards closed winning sets (much like for reachability objectives) and the worst opponent for downwards closed winning sets (much like for safety objectives). As a result, we can treat them like safety games:
if a player almost-surely loses in a leaf when all players play uniformly at random, but can almost surely win unilaterally on a component within the leaf, they can do so using a positional strategy and we can fix it.
This is because this leaf cannot contain an upwards closed winning region of that player -- for if it would, then the leaf itself would be a winning region and all players playing uniformly at random would be almost surely winning for her.
Consequently, she can only win if she can force the game to eventually stay in some downward closed winning set.
This is a safety objective; in particular, if other players later remove some of their edges, the region visited can only shrink, and the downward closedness ensures that the resulting infinity set is winning for her.

\subsection{Constructing Nash Equilibria}
\label{sec:algo}
We now generalise the ideas presented earlier in this section to prove the existence of memoryless SPE in games with players that may have reachability objectives, safety objectives, and 0-2 Muller objectives.
Formally, the main result of this section is the following.
\begin{theorem}\label{theo:SPE}
Let $\G = (A, (\Omega_i)_{i\in\llb n\rrb})$ be a game where, for all $i\in\llb n\rrb$, $\Omega_i$ is either a reachability, safety or a 0-2 Muller objective (i.e., defined by a family $\mathcal{F}$ such that, for all $F\in\mathcal{F}$, either all subsets of $F$ are in $\mathcal{F}$ or all supersets of $F$ are in $\mathcal{F}$).
Then there exists a strategy profile $\sigma$ that is a memoryless randomised SPE from any vertex such that for all $i\in\llb n\rrb$ and all $v\in V$, $\mathbb{P}_{v}^\sigma(\Omega_i)\in\{0, 1\}$.
\end{theorem}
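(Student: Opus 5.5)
We construct $\sigma$ by an iterative evaluation procedure that generalises Examples~\ref{ex:reach} and~\ref{ex:safety}. Throughout, we maintain three objects: a reduced arena $A'$ (a subgraph of $A$, still deadlock-free), a set $W\subseteq V$ of evaluated vertices, and for each $w\in W$ a label $L(w)\subseteq\llb n\rrb$. The strategy profile is fixed on all of $W$, and is partially fixed elsewhere by committed positional choices. The invariant we maintain is:
\begin{itemize}
\item[(I1)] $W$ is closed under successors in $A'$ with respect to the fixed profile.
\item[(I2)] For all $w\in W$, $\mathbb{P}^\sigma_w(\Omega_i)=1$ if $i\in L(w)$ and $=0$ otherwise, whatever is chosen later outside $W$.
\item[(I3)] Every removed edge $(u,u')$ with $u\in V_i$ has $u'\in W$ and $i\notin L(u')$.
\end{itemize}
In each round we first apply Rules~\ref{case:reach:1}--\ref{case:reach:3} exhaustively. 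Reachability objectives need one adjustment: if $v\in T_i$, then $i$ is added to the label of $v$. Safety objectives need another: if $v\notin S_i$, then $i$ is removed from the label of $v$. Muller objectives are prefix-independent, so their labels are simply inherited. When no rule applies, every unevaluated vertex has only unevaluated successors in $A'$, so $A'\setminus W$ contains a leaf SCC $C$.

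The leaf phase handles $C$ as follows. Consider the maximally mixed play on $C$, under which almost surely every vertex of $C$ is visited infinitely often, and also every vertex seen before reaching $C$ is visited. Define the uniform outcome accordingly. A reachability player wins iff $T_i\cap C\neq\emptyset$. A safety player wins iff $C\subseteq S_i$. A Muller player wins iff $C\in\mathcal F_i$. Now look for a player $i$ who loses under this outcome but can surely force, inside $C$, a winning play; for a safety player this means staying in $S_i$, for a Muller player it means staying inside some downward-closed $F\in\mathcal F_i$. If $i$ is a Muller player, $i$ cannot win via an upward-closed set $F\subseteq C$, because then $C\in\mathcal F_i$. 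Reachability players never lose in such a way that they could profitably deviate within $C$.

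If such a player $i$ exists, we compute $i$'s positional safety strategy inside the sub-arena of $C$, where the opponents are all other players. We commit to it, delete $i$'s other edges from the safe region, and restart the phase on the new leaf SCC. Each such step strictly decreases the number of edges, so the phase terminates. Once no such player exists, we fix the uniform strategy on $C$, evaluate it with the labels given by the uniform outcome, and add $C$ to $W$. The whole loop terminates because every round evaluates at least one vertex or deletes an edge, and each round runs in polynomial time.

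It remains to prove that the final $\sigma$ is an ENE. Because the objectives are, after the $0$--$1$ labelling, all "history-absorbing" in the sense of (I2) at every vertex, being an ENE then yields the SPE property. For an SPE after a history $h$ we apply the same argument with the prefix-adjusted objectives: a reachability target already hit, or a safety set already left, makes the subgame trivial, and Muller objectives are unchanged.

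To show there is no profitable deviation from $v$ for player $i$, we consider the order in which vertices were evaluated and argue by induction on that order. For Rule~\ref{case:reach:1}, player $i$ already wins with probability $1$. For Rules~\ref{case:reach:2}/\ref{case:reach:3}, any deviation along a removed edge leads to a vertex where $i$ wins with probability $0$, by (I3) and (I2); otherwise $i$ has no choice. A deviation inside a leaf $C$ either exits $C$ through removed edges, which is useless by (I3), or stays inside $C$. In the latter case, a reachability player already wins whenever a target lies in $C$. A safety player or a downward-closed Muller player who could improve would have to force staying in a winning region against the others' fixed strategies. Those strategies are uniform or committed positional safety strategies, so this would give a sure win against maximally mixed play, and player $i$ would have been selected in the leaf phase.

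\textbf{Main obstacle.} The hardest step is the stability of the committed safety strategies. Once player $j$ commits inside $C$, and later further commitments or edge removals happen, we must show two things. First, $j$ still wins almost surely: the set of visited vertices only shrinks, and the winning sets are downward closed or safety sets. Second, $j$'s deviations still cannot help even when $j$ is a reachability-type or upward-closed Muller player. We would handle this by proving, as a separate lemma, that the sure-safe region of $j$ is preserved under shrinking opponents' supports, and by quantifying deviations against the final profile rather than the intermediate ones.
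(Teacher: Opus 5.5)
Your plan is the paper's proof. The leaf phase is Rule~\ref{rule:1}(a)/(b), and your three propagation rules are Rules~\ref{rule:2}--\ref{rule:4}. Your induction over the evaluation order is Lemma~\ref{lem:dev}; you phrase it for the final profile, which works because deviations are arbitrary. The passage to SPE via prefix-adjusted objectives is the paper's proof of Theorem~\ref{theo:SPE}. There is, however, one concrete error in your bookkeeping, and it sits exactly where the paper has to do something extra.

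Invariant (I3) is false. A commitment in the leaf phase deletes edges of $\Player_i$ that lead to vertices of $C$. These are unevaluated at that moment, and they may later even be labelled winning for $\Player_i$ (e.g.\ an opponent vertex just outside $C'_i$ that is later evaluated towards $C'_i$ by the winning-successor rule). Your ENE argument cites (I3) in two places: deviations at a vertex evaluated by the unique-successor rule, and exits from a leaf. Both therefore need a dichotomy. An edge of $\Player_i$ removed at an unevaluated vertex $u$ was removed either by the drop-losing-successor rule (Rule~\ref{rule:4}), in which case it satisfies (I3), or by a commitment of $\Player_i$ herself. In the latter case $u\in C'_i$ and $\Player_i$ wins surely from $u$, so no deviation at $u$ matters. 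This is the paper's remark, in the Rule~\ref{rule:3} case of Lemma~\ref{lem:dev}, that no edge of $v$ can have been removed by Rule~\ref{rule:1}(a), since otherwise $i\in W_v$. For leaf exits you additionally need that $C'_i$ is closed under all edges remaining in $A'$. Then a strongly connected leaf meeting $C'_i$ lies inside it, and $\Player_i$ wins there.

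The same observation dissolves your \emph{main obstacle}. Its first half is immediate once the commitment is phrased as in the paper: a sure win against every behaviour of the others that stays in $C$. Since edges are only ever removed, the final profile is such a behaviour. Its second half is vacuous, because a committed player wins with probability~$1$ on all of $C'_i$, and reachability players never commit.

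Finally, your claim that each commitment deletes an edge needs one line of justification. If no edge were deleted, $C'_i$ would be closed under all edges of $C$, hence equal to $C$ by strong connectivity. Uniform play would then already be winning for $\Player_i$, contradicting her selection.
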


Let $\G = (A, (\Omega_i)_{i \in \llb n \rrb})$ be such a game. We provide an algorithm to construct this SPE.
In the following, we view a memoryless strategy profile as a function $\sigma\colon V\to \mathcal{D}(V)$.

We construct the strategy profile incrementally as a sequence $(\sigma_k)_{k \geq 0}$ of partially-defined strategy profiles, where each $\sigma_{k+1}$ extends $\sigma_k$.
We label the vertices $v$ in the domain of $\sigma_k$ by the set of players $W_v$ whose objectives are almost-surely satisfied from $v$ when playing according to $\sigma_k$ (this set is independent of $k$).
In some steps of the algorithm, we may not extend $\sigma_k$ and instead remove edges that do not lead to a winning scenario for the player who controls the vertex.

Initially, we let $\sigma_0$ be defined over the empty set.
We say that a vertex $v\in V$ is evaluated (at step $k$) if $\sigma_k(v)$ is defined, and say that $v$ is unevaluated otherwise.
The algorithm iteratively applies the following rules in arbitrary order until the domain of $\sigma_k$ is $V$.
\begin{enumerate}
    \item If there are no unevaluated vertices with evaluated successors, \label{rule:1}
    pick a leaf SCC $C$. We distinguish the following rules:
    \begin{enumerate}
        \item There exists a player $\Player_i$ whose objective is almost-surely falsified from $C$ when all players play uniformly at random, but who can win with positive probability from some vertex in $C$ when the other players play uniformly at random.
        In this case, the objective of $\Player_i$ is either a safety objective or a Muller objective given by $\mathcal{F}_i$ whose satisfaction is witnessed by a set $F\in\mathcal{F}_i$ whose subsets are all in $\mathcal{F}_i$.
        In either case, there exists a subset $C'\subseteq C$ and a positional strategy $\tau_i$ such that, for all strategy profiles $\tau_{-i}$ of the other players that do not exit $C$, all plays played according to $(\tau_i, \tau_{-i})$ starting in $C'$ remain in $C'$ and satisfy $\Omega_i$ (i.e., $\tau_i$ is a winning strategy in a zero-sum game and $C'$ is the set of vertices from which $\Player_i$ has a sure winning strategy).
        We fix such a pure memoryless winning strategy $\tau_i$ for $\Player_i$ over $C'$ and remove outgoing edges of $C'$ inconsistent with $\tau_i$.
        We let $\sigma_{k+1}=\sigma_k$.
        
        \item Otherwise, we extend $\sigma_{k}$ to $\sigma_{k+1}$ by choosing successors uniformly at random from all vertices in $C$.
        For all $v\in C$, we define $W_v$ to be the set of players whose objectives are satisfied when playing according to $\sigma_{k+1}$ (we explain how to determine this set below).
    \end{enumerate}

    \item If there exists $i\in\llb n\rrb$ and an unevaluated vertex $v \in V_i$ such that one of its successors $v'$ is evaluated and satisfies $i\in W_{v'}$, we define $\sigma_{k+1}(v) = v'$.\label{rule:2}
    We define $W_v$ as above and remove all other outgoing edges of $v$.

    \item If there exists a vertex $v$ that is unevaluated with a unique evaluated successor $v'$, we let $\sigma_{k+1}(v) = v'$. We define $W_v$ as above.\label{rule:3}

    \item Finally, if there exists $i\in\llb n\rrb$ and an unevaluated vertex $v \in V_i$ that has at least two successors, one of these successors $v'$ is evaluated and satisfies $i\notin W_{v'}$, we remove the edge from $v$ to $v'$ and let $\sigma_k = \sigma_{k+1}$.\label{rule:4}
\end{enumerate}

We let $\sigma$ denote an output of the above algorithm (it need not be unique).
The sets $W_v$ can be computed in practice as follows.
When playing uniformly at random in a leaf SCC $C$ due to Rule~\ref{rule:1}, the set $W_v$ for any $v\in C$ contains the players with a reachability objective with a target that intersects $C$, those with a safety objective for whom all vertices of $C$ are safe and those with a Muller objective for which $C$ is in the family defining the objective.
As all vertices of a leaf SCC $C$ are visited infinitely often almost-surely when playing at random, we remark that for all $v\in C$ and $i\notin W_v$, $\mathbb{P}^{\sigma}_v(\Omega_i) = 0$.
For a vertex $v$ such that $\sigma(v)$ is given by Rule~\ref{rule:2} or~\ref{rule:3}, for all $i\in\llb n\rrb$, whether $i\in W_v$ depends on the objective of $\Player_i$: if $\Player_i$ has a reachability objective, then $i\in W_v$ if, and only if, $v\in W_{\sigma(v)}$ or $v$ is in the target of $\Player_i$; if $\Player_i$ has a safety objective, then $i\in W_v$ if, and only if, $v\in W_{\sigma(v)}$ and $v$ is safe for $\Player_i$; if $\Player_i$ has a Muller objective, then $W_v = W_{\sigma(v)}$.
We obtain the following property of $\sigma$ by the above.

\begin{lemma}\label{lem:objective}
For all $i\in\llb n\rrb$ and all $v \in V$, $\mathbb{P}^{\sigma}_v(\Omega_i) \in \{0,1\}$.
\end{lemma}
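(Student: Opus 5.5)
We prove the stronger statement that for every $v\in V$ and every $i\in\llb n\rrb$, $\mathbb{P}^\sigma_v(\Omega_i)=1$ if $i\in W_v$ and $\mathbb{P}^\sigma_v(\Omega_i)=0$ otherwise. The argument is an induction on the order in which vertices are evaluated by the algorithm, that is, on the index $k$ at which $\sigma_{k+1}(v)$ first becomes defined. Since each $\sigma_{k+1}$ extends $\sigma_k$, the profile $\sigma$ agrees on the evaluated vertices with the profile under which $W_v$ was computed. Moreover, every successor used by $\sigma$ from an evaluated vertex is itself evaluated no later, because Rules~\ref{rule:2} and~\ref{rule:3} only point to already evaluated vertices and Rule~\ref{rule:1} only points inside the leaf $C$. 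So the distribution $\mathbb{P}^\sigma_v$ depends only on choices fixed at or before the evaluation of $v$, and the invariant can be checked at the moment $v$ is evaluated.

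\emph{Base case: leaf SCCs evaluated by Rule~\ref{rule:1}(b).} Here $C$ is a leaf SCC of the reduced arena at that step, and all vertices of $C$ are evaluated simultaneously with uniform choices. I would argue that $C$ is closed under the retained edges, so $\sigma$ induces a finite irreducible Markov chain on $C$. A standard fact then gives that, from any $v\in C$, almost surely the play stays in $C$ and $\mathrm{Inf}(\rho)=C$. Consequently a reachability objective holds almost surely if its target meets $C$ and with probability $0$ otherwise, since the play never leaves $C$. A safety objective holds almost surely if $C$ is entirely safe and with probability $0$ otherwise, since every vertex of $C$ is visited almost surely. A Muller objective holds with probability $1$ or $0$ according to whether $C\in\mathcal{F}$. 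This matches the description of $W_v$ given after the algorithm.

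\emph{Inductive step: Rules~\ref{rule:2} and~\ref{rule:3}.} Here $\sigma(v)$ is the Dirac distribution on $v'$, and $v'$ satisfies the invariant. Thus $\mathbb{P}^\sigma_v$ is the image of $\mathbb{P}^\sigma_{v'}$ under prefixing $v$, and we use the defining recursion for $W_v$. For a reachability objective, the play satisfies it if and only if $v\in T_i$ or the suffix does. For a safety objective, the play satisfies it if and only if $v\in S_i$ and the suffix does. For a Muller objective, $\mathrm{Inf}$ is unchanged by prefixing. In each case the probability is either the constant $1$ or $0$, or equals $\mathbb{P}^\sigma_{v'}(\Omega_i)\in\{0,1\}$ with the matching membership in $W_{v'}$. (I read the paper's ``$v\in W_{\sigma(v)}$'' as ``$i\in W_{\sigma(v)}$''.) Rules~\ref{rule:1}(a) and~\ref{rule:4} evaluate nothing, so they do not affect the invariant.

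\emph{Main obstacle.} The delicate step is the base case: we must show that the uniform Markov chain really is confined to $C$ and irreducible there. Two issues need care. First, $C$ is a leaf only in the \emph{reduced} arena, so we must check that the uniform choice ranges over reduced-arena edges. Second, edges removed by Rule~\ref{rule:1}(a) may cut $C$, but such a leaf is re-picked afterwards, since that rule does not extend $\sigma$. We must also confirm that the algorithm terminates with every vertex evaluated, so that the induction covers all of $V$. Termination holds because each application of Rule~\ref{rule:1}(a) or Rule~\ref{rule:4} strictly removes edges, so only finitely many such steps occur. Whenever no unevaluated vertex has an evaluated successor, a leaf SCC of unevaluated vertices exists in the reduced graph.
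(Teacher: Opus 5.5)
Your proposal is correct and follows the paper's argument. The paper gets the lemma "by the above": it uses that uniform play on a leaf SCC $C$ visits every vertex of $C$ infinitely often almost surely, together with the recursive computation of $W_v$ along the Dirac choices of Rules~\ref{rule:2} and~\ref{rule:3}; your induction on evaluation order (with ``$v\in W_{\sigma(v)}$'' read as ``$i\in W_{\sigma(v)}$'') makes this explicit. Your termination aside is not needed for the lemma, which concerns an output $\sigma$, though your unproved claim that Rule~\ref{rule:1}(a) always deletes an edge does hold: otherwise the sure-winning region $C'$ would be closed in the leaf $C$, hence equal to $C$, and $\Player_i$ would already win under uniform play.
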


We now prove that $\sigma$ is an ENE.
Our proof is by induction: we show that, for all $k\in \mathbb{N}$, any strategy $\sigma'$ that extends $\sigma_k$ is an NE from the vertices in the domain of $\sigma_k$.

\begin{lemma}\label{lem:dev}
Any strategy profile $\sigma$ constructed by the above algorithm is an ENE.
\end{lemma}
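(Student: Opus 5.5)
We prove, by induction on $k$, the invariant already announced: for every $k$, every memoryless profile $\sigma'$ extending $\sigma_k$ in the current reduced arena $A_k$ has no profitable deviation from any vertex $v\in\mathrm{dom}(\sigma_k)$. The deviation may be in the original arena $A$, with arbitrary memory and randomisation. Two further facts are carried along.

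\begin{enumerate}
\item[(a)] For every $v\in\mathrm{dom}(\sigma_k)$ and every $i\notin W_v$, $\Player_i$ has no strategy in $A$ winning with positive probability from $v$ against $\sigma'_{-i}$.
\item[(b)] Every edge $(u,u')$ removed by Rule~\ref{rule:2} or Rule~\ref{rule:4}, with $u\in V_i$, leads to an evaluated $u'$ from which $\Player_i$ cannot gain by (a). Edges removed by Rule~\ref{rule:1}(a) belong to the fixed sure-winning strategy $\tau_i$ of $\Player_i$, so they are only ever left by $\Player_i$ himself.
\end{enumerate}

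Since $\sigma$ extends every $\sigma_k$ and $\mathrm{dom}(\sigma_k)=V$ eventually, the invariant gives that $\sigma$ is an ENE. By Lemma~\ref{lem:objective}, it suffices to show that a player $i\notin W_v$ cannot obtain positive probability.

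The inductive step follows the rule applied.
\begin{itemize}
\item \textbf{Rules~\ref{rule:2} and~\ref{rule:3}.} Take $v\in V_i$. Any deviation of $\Player_j$, $j\neq i$, from $v$ first moves to $\sigma(v)$, and we conclude by induction, noting that membership of $j$ in $W_v$ depends on $W_{\sigma(v)}$ exactly as the objective transfers along one step. A deviation of $\Player_i$ itself matters only when $i\notin W_v$. In that case Rule~\ref{rule:2} was not applicable for $\Player_i$ at $v$, or was applied with $i\in W_v$. So every remaining successor is evaluated and losing for $\Player_i$, and every removed successor is losing for $\Player_i$ by (b). By (a), from each successor $\Player_i$ wins with probability $0$ under any continuation, since the subgame objective from a successor is $\Omega_i$ shifted by one step and shifting does not help for reachability, safety or Muller objectives once the current vertex is accounted for.
\item \textbf{Rule~\ref{rule:4}.} This only removes an edge, and justifies the removed-edge part of (b).
\end{itemize}

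The main obstacle is \textbf{Rule~\ref{rule:1}(b)} for a leaf SCC $C$ of the reduced arena. Fix $i\notin W_v$ for $v\in C$. Plays under $\sigma'_{-i}$, uniform on $C$, can leave $C$ only through edges removed from vertices in $C$; we must show that no such exit, and no behaviour inside $C$, helps $\Player_i$.
\begin{itemize}
\item \textbf{Removed edges.} Edges removed by Rules~\ref{rule:2} and~\ref{rule:4} from $\Player_i$'s vertices lead to losing evaluated vertices, so they are handled by (b). Edges removed by Rule~\ref{rule:1}(a) belong to some other player $\Player_j$ who committed to the positional $\tau_j$; they are never taken by $\sigma'$, and $\Player_i$ cannot force them.
\item \textbf{Inside $C$.} Because (a) did not apply to $C$, $\Player_i$ cannot win with positive probability from any vertex of $C$ against uniform play in $C$.
\item \textbf{Exiting via $\Player_i$'s own removed edge.} If $\Player_i$ himself previously fixed $\tau_i$ on some $C''\supseteq C$-region, then $C$ lies inside $C''$ and $\Player_i$ wins surely there, so $i\in W_v$.
\end{itemize}
For this last point we must check that fixing $\tau_i$ keeps $\Player_i$ winning: other players' later edge removals only shrink the set of possible plays. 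This is where downward closure for Muller objectives and safety are used, and it is why sure-winning from $C'$ survives. Hence $i\in W_v$ for all $v\in C'$ once $C'$ or a sub-leaf of it is evaluated.

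Combining these, a deviation of $\Player_i$ from $C$ either stays in $C$ with uniform opponents, yielding probability $0$, or exits into evaluated losing vertices, where (a) applies. A careful measure-theoretic argument handles mixtures of both: condition on the first exit time. This also establishes (a) for the new vertices and closes the induction.
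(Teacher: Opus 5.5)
Your proposal is correct and takes essentially the same route as the paper. You use the same induction over the steps of the algorithm with a case split on Rules~\ref{rule:1}(b), \ref{rule:2} and~\ref{rule:3}, and you reduce a deviation at a newly evaluated vertex to a deviation from an already evaluated successor. For a leaf $C$, you combine the failure of Rule~\ref{rule:1}(a) with the fact that exits from $V_i\cap C$ lead only to evaluated vertices losing for the deviator. Your explicit treatment of edges removed by Rule~\ref{rule:1}(a), via $C\subseteq C'$ and hence $i\in W_v$, is more careful than the paper's, which attributes all such exits to Rule~\ref{rule:4}.

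There is one harmless slip. Fact (b) is false for Rule~\ref{rule:2}, which removes edges to arbitrary, possibly unevaluated or winning, successors. Also, Rule~\ref{rule:2} can yield $i\notin W_v$ for a safety owner at an unsafe $v$, a case your Rule~\ref{rule:2}/\ref{rule:3} argument wrongly excludes. Neither point breaks the proof. That case is trivial, since every play from $v$ already violates $\Safe(S_i)$. And Rule~\ref{rule:2} never removes edges from vertices that remain unevaluated.
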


\begin{proof}
We prove the result by induction.
In the base case, $\sigma_0$ is defined over $\emptyset$ and therefore the claim is vacuously true.

We now assume by induction that any memoryless strategy profile extending $\sigma_k$ is an NE from every vertex the domain of $\sigma_k$.
Let $\sigma_{k+1}$ be the extension of $\sigma_k$ obtained after one additional iteration, and let $\sigma'$ be a (fully defined) memoryless profile extending $\sigma_{k+1}$.
We show that it is an NE from every vertex in the domain of $\sigma_{k+1}$.
It suffices to show that no player has a profitable deviation with respect to $\sigma'$ from vertices over which $\sigma_{k+1}$ is defined and $\sigma_{k}$ is not.
We consider three rules.

First, assume that $\sigma_{k+1}$ is obtained from $\sigma_k$ by playing uniformly at random over some set of vertices $C$ by Rule~\ref{rule:1} ($C$ is a leaf SCC that emerged during the algorithm).
We show that $\sigma'$ is an NE from all $v\in C$.
We remark that $W_v = W_{v'}$ for all $v, v'\in C$.
Let $i\in\llb n\rrb$.
If, for all $v\in C$, $i\in W_v$ (i.e., $\mathbb{P}^{\sigma'}_v(\Omega_i) = 1$),  then $\Player_i$ has no profitable deviation from any $v\in C$ with respect to $\sigma'$.
Assume therefore that $\mathbb{P}^{\sigma_{k+1}}_v(\Omega_i) = 0$ (cf. Lemma~\ref{lem:objective}).
By Rule~\ref{rule:1}(a), $\Player_i$ has no profitable deviation that remains in $C$.
If all outgoing edges of $V_i \cap C$ are in $C$, this shows that $\Player_i$ has no profitable deviation.
We now show that $\Player_i$ cannot improve their gain by using an edge from $V_i\cap C$ to $V\setminus C$.
Assume towards contradiction that this is the case, and let $(v, v')$ be such an edge.
This edge has been removed by Rule~\ref{rule:4}, and thus $i\notin W_{v'}$.
It follows that $\Player_i$ has a profitable deviation from $v'$ with respect to $\sigma'$.
This contradicts the induction assumption that $\sigma'$ is an NE from $v'$.

Second, we assume that $\sigma_{k+1}$ is obtained from $\sigma_k$ by Rule~\ref{rule:2} and let $v$ be the vertex that is added to the domain of $\sigma_{k}$.
Let $i\in\llb n\rrb$.
First, assume that $v\in V_i$.
If $\Player_i$ has a reachability or Muller objective, we have $i\in W_v$ and thus $\Player_i$ has no profitable deviation.
If $\Player_i$ has a safety objective, then $i\in W_v$ if and only if $v$ is safe for $\Player_i$; $\Player_i$ has no profitable deviation from $v$ regardless of the safety of $v$.
Assume now that $\Player_i$ does not control $v$.
If $\Player_i$ has a profitable deviation from $v$, then $\Player_i$ must have a profitable deviation from $\sigma_{k+1}(v)$.
By induction, there are no such profitable deviations.

Finally, we assume that $\sigma_{k+1}$ is obtained from $\sigma_k$ by Rule~\ref{rule:3}. 
Let $v$ be the vertex that is added to the domain of $\sigma_{k}$.
Let $i\in\llb n\rrb$.
We first consider the case $v\in V_i$.
We observe that no outgoing edges of $v$ could have been removed by Rule~\ref{rule:1}(a); otherwise, we would have $i\in W_v$.
It follows that all successors $v'$ of $v$ satisfy $i\notin W_{v'}$ by Rule~\ref{rule:4}.
If $\Player_i$ has a profitable deviation from $v$, it must be the case that $\Player_i$ has a profitable deviation from one of the successors, and there are no such deviations by induction.
We conclude similarly as in the previous case for players such that $v\notin V_i$.

We have shown that $\sigma'$ is an NE from all vertices in the domain of $\sigma_{k+1}$.
This shows that $\sigma$ is an ENE.
\end{proof}

We now prove Theorem~\ref{theo:SPE} by using Lemma~\ref{lem:dev}.

\begin{proof}[Proof of Theorem~\ref{theo:SPE}]
Let $\sigma=(\sigma_1,\sigma_2,\ldots,\sigma_n)$ be a strategy profile constructed by the algorithm. 
By Lemma~\ref{lem:objective}, we have $\mathbb{P}^{\sigma}_v(\Omega_i) \in \{0,1\}$ for all $i\in\llb n\rrb$ and all $v \in V$.

It remains to prove that $\sigma$ is an SPE from all vertices.
We prove it by contradiction.
Assume that there exists a history $hv$ (with $h$ a history and $v\in V$), a player $\Player_i$ and a profitable deviation $\sigma'_i$ for $\Player_i$ with respect to $\sigma$ from $v$ in the subgame $\G_{h}$.
Then $\sigma'_i$ is also a profitable deviation from $v$ with respect to $\sigma$ in $\G$ (due to the objectives we consider), contradicting Lemma~\ref{lem:dev}.
This shows that $\sigma$ is an SPE from all initial vertices.
\end{proof}

We now highlight a restricted class of games with reachability and safety objectives in which we can show the existence of pure memoryless strategy profiles that are SPEs from all vertices by adapting our construction.
To identify this class, we reflect on when randomisation is used in our construction, and when it is absent or can be omitted.
Randomised decisions are only made in leaf SCCs that emerge during the algorithm.
Its purpose is to enforce stability by visiting all targets within the SCC for the reachability players and by violating the objectives of the safety players for whom there is an unsafe vertex in the SCC.

We identify two situations in which randomisation is not used or is not helpful in achieving stability when our algorithm builds a strategy profile.
First, if all vertices in a leaf SCC have a single successor, then all strategies of this SCC are pure by default. 
In particular, in absorbing vertices (i.e., vertices who are their own unique successor), randomisation is not used.
Second, if no targets and no unsafe states occur in a leaf SCC, fixing any pure strategy on this SCC leads to a behaviour that is as stable as a maximally mixed strategy.
On the one hand, for the safety players, plays that stay in this leaf SCC satisfy their safety objective, therefore they have no profitable deviation.
On the other hand, for the reachability players, they cannot win within the SCC and none of the deleted edges that exit the leaf SCC can lead to a profitable deviation by construction (as explained in the proof of Lemma~\ref{lem:dev}).

These observations suggest that if all targets and all unsafe states are absorbing, we can construct pure memoryless strategy profiles that are SPEs from all vertices by changing Rule~\ref{rule:1}(b) to fix an arbitrary outgoing edge in all vertices of leaf SCCs.
This yields the following result.
\begin{theorem}\label{theo:absorbing}
Let $\G = (A, (\Omega_i)_{i\in\llb n\rrb})$ be a game where, for all $i\in\llb n\rrb$, $\Omega_i$ is either a reachability or a safety objective such that all unsafe vertices and target vertices are absorbing.
Then there exists a strategy profile $\sigma$ that is a memoryless pure SPE from any vertex.
\end{theorem}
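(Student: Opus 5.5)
We run the algorithm of Section~\ref{sec:algo} unchanged, except that Rule~\ref{rule:1}(b) now fixes one arbitrary outgoing edge (of the reduced arena) at every vertex of the leaf SCC $C$ instead of mixing uniformly. Rules~\ref{rule:2} and~\ref{rule:3} already choose a single successor. For Rule~\ref{rule:1}(a) we pick a positional sure-winning strategy of the safety player; this exists because safety games are positionally determined. So every decision is pure and memoryless, and the output $\sigma$ is positional. As in the proof of Theorem~\ref{theo:SPE}, reachability and safety objectives are prefix-independent in the sense used there: a deviation profitable in $\G_h$ from $v$ is profitable in $\G$ from $v$. It therefore suffices to show that $\sigma$ is an ENE. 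We do this by re-running the induction of Lemma~\ref{lem:dev} for the modified rule; the cases of Rules~\ref{rule:2} and~\ref{rule:3} go through verbatim.

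The key step is to describe a leaf SCC $C$ of the reduced arena on which Rule~\ref{rule:1}(b) is applied. There are two cases.
\begin{enumerate}
\item If $|C|=1$ with a self-loop, the play is the same under both versions of the rule.
\item Otherwise $C$ contains a cycle of length at least two, so no vertex of $C$ is absorbing in the reduced arena.
\end{enumerate}
An absorbing vertex keeps its unique edge, because Rules~\ref{rule:4} and~\ref{rule:1}(a) never delete the last edge. Hence in the second case $C$ contains no target and no unsafe vertex, since those are absorbing in $A$. Consequently every play inside $C$, pure or mixed, satisfies the same objectives. It is won by every safety player whose safe set contains the history leading into $C$, and lost by every reachability player, except those whose target was already visited, which is handled by the $W_v$ labels. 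In particular, the label $W_v$ of every vertex of $C$ is the same as under uniform play. This also re-establishes Lemma~\ref{lem:objective} for the modified algorithm.

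Next, no player has a profitable deviation from $v\in C$ under any positional completion $\sigma'$.
\begin{itemize}
\item A safety player losing from $C$ must have an unsafe vertex in $C$. By the previous step this forces $C$ to be an absorbing singleton, so she cannot leave.
\item A reachability player cannot reach her target while staying in $C$, because $C$ contains no target. Any edge leaving $C$ from one of her vertices was removed by Rule~\ref{rule:4} or by Rule~\ref{rule:1}(a).
\begin{itemize}
\item If it was removed by Rule~\ref{rule:4}, she loses from the successor, and the induction hypothesis that $\sigma'$ is an NE there gives the contradiction, exactly as in Lemma~\ref{lem:dev}.
\item Rule~\ref{rule:1}(a) only fixes edges at safety players' vertices, so it cannot remove her edges.
\end{itemize}
\item Edges that stayed inside the original arena but were removed within the current component are handled the same way.
\end{itemize}

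The expected main obstacle is Rule~\ref{rule:1}(a) interacting with purity. The earlier argument used the fact that other players mix uniformly, so that ``losing under uniform play but winning with positive probability'' identifies a safety player able to cut edges. In the pure setting the trigger must stay defined relative to uniform play of the others, so we keep that trigger unchanged. We also check that the fixed safety strategy is sure-winning against every behaviour inside $C$, so later pure choices cannot invalidate it. With absorbing unsafe vertices, a safety player loses on $C$ only if $C$ contains an unsafe absorbing vertex, in which case $C$ is that singleton. Thus Rule~\ref{rule:1}(a) in fact never fires on nontrivial components, and we would verify this carefully to close the induction.
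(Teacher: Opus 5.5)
Your proposal is correct and follows the paper's route. You replace the uniform choice in Rule~\ref{rule:1}(b) by an arbitrary edge of the reduced arena, and you use that every leaf SCC is either a single self-loop vertex or, since absorbing vertices never lose their only edge, contains no target and no unsafe vertex. So the labels $W_v$ and the deviation argument of Lemma~\ref{lem:dev} are unchanged, and the ENE-to-SPE step of Theorem~\ref{theo:SPE} then applies. Your observation that Rule~\ref{rule:1}(a) never fires under the absorbing hypothesis is left implicit in the paper: a safety player losing on a leaf SCC forces it to be an unsafe absorbing singleton, from which she cannot win at all. It holds for singleton components too, not only nontrivial ones, and it is what makes the whole construction pure.
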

Theorem~\ref{theo:absorbing} slightly generalises the main result of~\cite{Kehagias22}, which states this existence result for NE.
The proof approach used in~\cite{Kehagias22} is not algorithmic: the main idea is to consider a discounted version of the game and then transform a memoryless equilibrium into a positional one.
While the existence of memoryless equilibria in discounted games is known~\cite{Fink64}, the proof is not constructive and relies on Kakutani's fixed point theorem~\cite{kakutani41}.
In contrast, the adaptation of our algorithm outlined above provides a direct construction of an equilibrium.

\section{Conclusion}
In this paper, we have constructed memoryless randomised ENEs in 
turn-based games where all players have reachability, safety, or simple 
Muller objectives whose winning sets are up- or downward closed (the 0-2 class in the Mostowski hierarchy). We have first shown that randomisation is necessary even for simple combinations of reachability and safety objectives, and that memoryless NEs may fail to exist for 1-3 Muller objectives.
Our algorithm constructs an ENE in polynomial time. We proved that every ENE is an SPE in this setting, establishing the existence of memoryless randomised SPEs for this class of games. It remains open whether games in which all players have only reachability objectives, or only safety objectives, admit positional NEs, ENEs, or SPEs.

\bibliography{references.bib}

\end{document}